\documentclass[journal]{IEEEtran}



\usepackage{amsfonts}
\usepackage{color}
\usepackage{graphicx}
\usepackage[dvips]{epsfig}
\usepackage{graphics} 
\usepackage{times} 
\usepackage[cmex10]{amsmath} 
\usepackage{amssymb}  
\usepackage{cite}
\usepackage{multirow}
\usepackage[tight,footnotesize]{subfigure}
\usepackage{amsmath}
\usepackage[boxed,ruled,lined]{algorithm2e}

\def\norm #1{\left\|#1\right\|}
\def\inftyn #1{\left\|#1\right\|_{\infty}}
\def\twon #1{\left\|#1\right\|_2}
\def\onen #1{\left\|#1\right\|_1}
\def\zeron #1{\left\|#1\right\|_0}

\def\abs #1{\left|#1\right|}

\def\bR{\mathbb{R}}

\def\bS{\mathbb{S}}

\def\m #1{\boldsymbol{#1}}

\def\cB{\mathcal{B}}

\def\cK{\mathcal{K}}
\def\cL{\mathcal{L}}
\def\cN{\mathcal{N}}
\def\cP{\mathcal{P}}

\def\bee{\begin{equation}}
\def\ene{\end{equation}}

\def\beq{\begin{eqnarray}}
\def\enq{\end{eqnarray}}
\def\lentwo{\setlength\arraycolsep{2pt}}

\newtheorem{lem}{Lemma}
\newtheorem{rem}{Remark}

\newtheorem{thm}{Theorem}
\newtheorem{prop}{Proposition}

\def\equ #1{\begin{equation}#1\end{equation}}
\def\equa #1{\begin{eqnarray}#1\end{eqnarray}}
\def\sbra #1{\left(#1\right)}
\def\mbra #1{\left[#1\right]}
\def\lbra #1{\left\{#1\right\}}
\def\diag #1{\text{diag}#1}


\title{Bayesian Compressed Sensing With New Sparsity-Inducing Prior}
\author{Zai Yang, Lihua Xie$^*$, and Cishen Zhang

\thanks{$^*$Author for correspondence.

Z. Yang and L. Xie are with EXQUISITUS, Centre for E-City, School of Electrical and Electronic Engineering, Nanyang Technological University, 639798, Singapore (e-mail: yang0248@e.ntu.edu.sg; elhxie@ntu.edu.sg).

C. Zhang is with the Faculty of Engineering and Industrial Sciences, Swinburne University of Technology, Hawthorn VIC 3122, Australia (e-mail: cishenzhang@swin.edu.au).}}

\begin{document}
\maketitle

\begin{abstract} Sparse Bayesian learning (SBL) is a popular approach to sparse signal recovery in compressed sensing (CS). In SBL, the signal sparsity information is exploited by assuming a sparsity-inducing prior for the signal that is then estimated using Bayesian inference. In this paper, a new sparsity-inducing prior is introduced and efficient algorithms are developed for signal recovery. The main algorithm is shown to produce a sparser solution than existing SBL methods while preserving their desirable properties. Numerical simulations with one-dimensional synthetic signals and two-dimensional images verify our analysis and show that for sparse signals the proposed algorithm outperforms its SBL peers in both the signal recovery accuracy and computational speed. Its improved performance is also demonstrated in comparison with other state-of-the-art methods in CS.
\end{abstract}

\begin{IEEEkeywords}
Compressed sensing, G-STG prior, greedy algorithm, sparse Bayesian learning.
\end{IEEEkeywords}


\section{Introduction}
In practice, one would like to determine a high-dimensional signal $\m{x}\in\bR^N$ from its relatively low-dimensional linear measurements $\m{y}=\m{A}\m{x}\in\bR^M$, where $\m{A}\in\bR^{M\times N}$ and $M<N$. There exist infinite candidates of $\m{x}$ that satisfy the linear equation in general. To possibly recover the true signal, some additional information needs to be taken into account. Fortunately, most signals of interest are sparse or compressible under appropriate bases, e.g., an image under a wavelet basis. Without loss of generality, we consider a signal that is sparse under the canonical basis since any sparsifying transform of $\m{x}$ can be absorbed into the matrix $\m{A}$. So we need to search for the maximally sparse solution to the linear equation, i.e., to minimize $\zeron{\m{x}}$ that counts the number of nonzero entries of $\m{x}$ subject to the constraint $\m{A}\m{x}=\m{y}$. Since this combinatorial optimization problem is NP-hard, its convex relaxation (replacing $\zeron{\m{x}}$ by $\onen{\m{x}}$) coined as basis pursuit (BP) has been extensively studied (see \cite{tropp2006just} and references therein). During the past several years, the sparse signal recovery problem has been developed into a well-known research area named as compressed sensing (CS)\cite{candes2006compressive}, which has wide applications including medical imaging \cite{lustig2007sparse}, source localization \cite{yang2012robustly} and single-pixel camera \cite{duarte2008single}, to name just a few. In CS, one wishes to recover a sparse signal $\m{x}$ from its compressive measurements $\m{y}=\m{A}\m{x}$. The sensing matrix $\m{A}$ is typically generated from a random distribution such that it satisfies some desirable properties such as restricted isometry property (RIP) \cite{candes2006compressive}. It is shown that $\ell_1$ optimization can recover the true signal exactly under mild conditions. In addition, $\ell_1$ optimization is robust to measurement noises and works efficiently with compressible signals \cite{candes2006stable}. While the $\ell_1$ norm used to promote sparsity is a convex approximation to the original $\ell_0$ norm, nonconvex optimizations have also been studied. It is shown in \cite{chartrand2007exact,chartrand2008restricted} that improved performance can be obtained using the $\ell_p$ ($0<p<1$) norm. The nonconvex objective function $\sum_{i=1}^N\log\sbra{\abs{x_i}+\tau}$ with $\tau\geq0$ is related to reweighted $\ell_1$ minimization in \cite{candes2008enhancing}. Another class of approaches to CS uses a greedy pursuit method including OMP \cite{tropp2007signal} and StOMP\cite{donoho2006sparse}. In a greedy algorithm, the support of the solution is modified sequentially with local optimization in each step.

This paper is focused on Bayesian approaches to CS, known as Bayesian CS \cite{ji2008bayesian}. This method was originated from the area of machine leaning and introduced by Tipping \cite{tipping2001sparse} for obtaining sparse solutions to regression and classification tasks that use models which are linear in the parameters, coined as relevance vector machine (RVM) or sparse Bayesian learning (SBL). SBL is built upon a statistical perspective where the sparsity information is exploited by assuming a sparsity-inducing prior for the signal of interest that is then estimated via Bayesian inference. Its theoretical performance is analyzed by Wipf and Rao \cite{wipf2004sparse}. After being introduced into CS by Ji \emph{et al.} \cite{ji2008bayesian}, this technique has become a popular approach to CS and other sparsity-related problems. For example, a Bayesian framework is presented in \cite{wipf2009unified} for MEG/EEG source imaging. In \cite{yang2012unified}, the authors introduce a framework to unify the CS problems with multi- and one-bit quantized measurements and estimate the sparse signals of interest and quantization errors based on a Bayesian formulation. Currently, main research topics in SBL for CS include (a) developing efficient sparsity-inducing priors, (b) incorporating additional signal structures in the prior besides sparsity and (c) designing fast and accurate inference algorithms. This paper studies problems (a) and (c). Examples of (b) include \cite{he2009exploiting} that exploits the wavelet structure of images and \cite{yu2011bayesian} on cluster structured sparse signals.



Many sparsity-inducing priors have been studied in the literature. In \cite{he2009exploiting,yu2011bayesian}, a spike-and-slab prior \cite{ishwaran2005spike} is applied which is a mixture of a point mass at zero and a continuous distribution elsewhere and fits naturally for sparse signals. A typical inference scheme for such a prior is a Markov chain Monte Carlo (MCMC) method \cite{robert2004monte} due to the lack of closed-form expressions of Bayesian estimators. As a result, the inference process may suffer from computational difficulties because a large number of samples are required to approximate the posterior distribution and the convergence is typically slow. A popular class of sparsity-inducing priors is introduced in a hierarchical framework where a complex prior is composed of two or more simple distributions. For example, a Student's $t$-prior (or Gaussian-inverse gamma prior) is used in the basic SBL \cite{tipping2001sparse} that is composed of a Gaussian prior in the first layer and a gamma prior in the second. A Laplace (Gaussian-exponential) prior is used in \cite{babacan2010bayesian}. A Gaussian-gamma prior is recently studied in \cite{pedersen2011sparse} that generalizes the Laplace prior. Two popular inference methods for the hierarchical priors are evidence procedure \cite{mackay1992bayesian}, e.g., in \cite{tipping2001sparse,babacan2010bayesian}, and variational Bayesian inference \cite{beal2003variational}, e.g., in \cite{shutin2011sparse}. Both the methods are approximations of Bayesian inference since the exact inference is intractable. In an evidence procedure, the signal estimator has a simple expression in which some unknown hyperparameters are involved and estimated iteratively by maximizing their evidence. In a variational Bayesian inference method, the posterior distribution is approximated using some family of tractable distributions followed by computation of an optimal distribution within the family. To circumvent high-dimensional matrix inversions, a fast algorithm framework is developed in \cite{tipping2003fast} for evidence procedure and also adopted in \cite{babacan2010bayesian}. But it is observed in this paper and \cite{pedersen2011sparse} that the fast algorithms in \cite{tipping2003fast,babacan2010bayesian} typically produce signal estimates with overestimated support sizes, especially in a low signal to noise ratio (SNR) regime or in the case of a large sample size. A new sparsity-inducing prior and algorithm are proposed in this work to resolve this problem with improved convergence speed.

Though formulated from a different perspective, SBL is related to other approaches to CS. Consider the observation model $\m{y}=\m{A}\m{x}+\m{e}$ where $\m{e}$ represents an additive white Gaussian noise (AWGN). Let $p\sbra{\m{x}}$ be the prior for $\m{x}$. Then, a maximum {\em a posteriori} (MAP) estimator of $\m{x}$ coincides with a solution to a regularized least-squares problem with $-\log p\sbra{\m{x}}$ (up to a scale) being the regularization term, which bridges SBL and optimization methods. For example, a Laplace prior corresponds to the widely studied $\ell_1$ minimization. A prior corresponding to the nonconvex $\ell_p$ ($0<p<1$) norm is studied in \cite{babacan2009non}. The fast algorithm in \cite{tipping2003fast} is related to the greedy pursuit method. In fact, it is a greedy algorithm using a different support modification criterion. Unlike OMP and StOMP, it allows deletion of irrelevant basis vectors that may have been added to the solution support in earlier steps.

In this paper, we introduce a new sparsity-inducing prior named as Gaussian shifted-truncated-gamma (G-STG) prior that generalizes the Gaussian-gamma prior in \cite{pedersen2011sparse}. The extended flexibility of the new prior promotes its capability of modeling sparse signals. In fact, it is shown that the Gaussian-gamma prior cannot work in the main algorithm of this paper. From the perspective of MAP estimation, the G-STG prior corresponds to a nonconvex objective function in optimization methods in general. For signal recovery we propose an iterative algorithm based on an evidence procedure and a fast greedy algorithm inspired by the algorithm in \cite{tipping2003fast}. We show that similar theoretical guarantees shown in \cite{wipf2004sparse} hold for the new SBL method as for the basic SBL. Specifically, we show that every local optimum of the SBL cost function is achieved at a sparse solution and that the global optimum is achieved at the maximally sparse solution. Moreover, we show that the proposed algorithm produces a sparser solution than existing SBL methods. We provide simulation results with one-dimensional synthetic signals and two-dimensional images that verify our analysis. We compare our proposed method with state-of-the-art ones to illustrate its improved performance.



Notations used in this paper are as follows. Bold-face letters are reserved for vectors and matrices. For ease of exposition, we do not distinguish a random variable from its numerical value. $x_i$ is the $i$th entry of a vector $\m{x}$. $\zeron{\m{x}}$ counts the number of nonzero entries of $\m{x}$. $\norm{\m{x}}_p=\sbra{\sum_i \abs{x_i}^p}^{1/p}$ for $p>0$ denotes the $\ell_p$ norm (or pseudo-norm) of $\m{x}$. $\diag\sbra{\m{x}}$ denotes a diagonal matrix with diagonal entries being the elements of the vector $\m{x}$. $A_{ij}$ denotes the $(i,j)$th entry of a matrix $\m{A}$. $\abs{\m{A}}$ denotes the determinant of the matrix $\m{A}$. $E\lbra{v}$ denotes the expectation of a random variable $v$.

The rest of the paper is organized as follows. Section \ref{sec:newsparseprior} introduces the G-STG prior. Section \ref{sec:algorithms} presents the Bayesian framework, an iterative procedure for signal recovery, and theoretical results of the new SBL method. Section \ref{sec:greedyalg} introduces a fast greedy algorithm with analysis. Section \ref{sec:simulation} provides empirical results to show the efficiency of the proposed solution. Section \ref{sec:conclusion} concludes this paper.

\section{G-STG Prior} \label{sec:newsparseprior}
\subsection{Mathematical Formulation}
We introduce the hierarchical Gaussian shifted-truncated-gamma (G-STG) prior for a sparse signal $\m{x}\in\bR^N$ as follows:
{\lentwo\equa{p\sbra{\m{x}|\m{\alpha}}
&=&\cN\sbra{\m{x}|\m{0},\m{\Lambda}},\label{formu:prior_x}\\p\sbra{\m{\alpha};\tau,\epsilon,\eta}
&=&\prod_{i=1}^Np\sbra{\alpha_i;\tau,\epsilon,\eta}=\prod_{i=1}^{N} \Gamma_{\tau}\sbra{\alpha_i|\epsilon,\eta},\label{formu:prior_alpha}}
}where $\m{\Lambda}=\diag\sbra{\m{\alpha}}$, $\m{\alpha}\in\bR^{N}$, $p\sbra{\alpha_i;\tau,\epsilon,\eta}$ is a shifted-truncated-gamma (STG) distribution for $\alpha_i\geq0$ with
\equ{\Gamma_{\tau}\sbra{\alpha_i|\epsilon,\eta}=\frac{\eta^{\epsilon}} {\Gamma_{\eta\tau}\sbra{\epsilon}} \sbra{\alpha_i+\tau}^{\epsilon-1}\exp\lbra{-\eta\sbra{\alpha_i+\tau}}, \label{formu:pdf_trun_gamma}}
$\epsilon\geq0$ is the shape parameter, $\eta\geq0$ is the rate parameter, $\tau\geq0$ is the threshold parameter and $\Gamma_{\tau}\sbra{\epsilon}=\int_{\tau}^{\infty}t^{\epsilon-1}e^{-t}dt$ denotes an incomplete gamma function. The first layer of the prior is a commonly used Gaussian prior that leads to convenient computations as shown later. In the second layer $\alpha_i$, $i=1,\cdots,N$, are assumed to be independent, and further $\alpha_i+\tau$, $i=1,\cdots,N$, are i.i.d. truncated gamma distribution (that is why we say that $p\sbra{\alpha_i;\tau,\epsilon,\eta}$ is an STG distribution). By $p\sbra{\alpha_i}\propto\sbra{\alpha_i+\tau}^{\epsilon-1} \exp\lbra{-\eta\alpha_i}$, $i=1,\cdots,N$, it is obvious that $\alpha_i$ is favored to be zero in the second layer if $\epsilon\leq1$, resulting in that $x_i$ is favored to be zero. Thus the hierarchical prior is a sparsity-inducing prior. In general, there is no explicit expression for the marginal distribution
\equ{\begin{split}p\sbra{\m{x};\tau,\epsilon,\eta}
&=\prod_{i=1}^N p\sbra{x_i;\tau,\epsilon,\eta}\\
&=\prod_{i=1}^N\int_0^{\infty} \cN\sbra{x_i|0,\alpha_i} \Gamma_{\tau}\sbra{\alpha_i|\epsilon,\eta}d\alpha_i.\end{split} \label{formu:marginaldist}}

In the following we study some special cases and show that the G-STG prior generalizes those in \cite{babacan2010bayesian,pedersen2011sparse}.

\emph{1) $\epsilon=1$: } In this case, the second layer is reduced to an exponential prior independent of $\tau$ since $\Gamma_{\tau}\sbra{\alpha_i|1,\eta}=\eta e^{-\eta \alpha_i}$. Then, the G-STG prior coincides with the Laplace prior in \cite{babacan2010bayesian} with $p\sbra{x_i;\tau,1,\eta}=\sqrt{\eta/2}\exp\sbra{-\sqrt{2\eta}\abs{x_i}}$.

\emph{2) $\tau=0$: } The second layer becomes a gamma prior for each $\alpha_i$. As a result, the proposed prior becomes the Gaussian-gamma prior in \cite{pedersen2011sparse} and $p\sbra{x_i}=\frac{2^{3/4-\epsilon/2}} {\sqrt{\pi}\Gamma\sbra{\epsilon}}\eta^{\frac{2\epsilon+1}{4}}\abs{x_i}^{\epsilon-\frac{1}{2}} \cK_{\epsilon-\frac{1}{2}}\sbra{\sqrt{2\eta}\abs{x_i}}$ where $\cK_{\nu}\sbra{\cdot}$ is the modified Bessel function of the second kind and order $\nu\in\bR$. In addition, we have that $p\sbra{0}=+\infty$ if $\epsilon\leq\frac{1}{2}$ and $p\sbra{0}<+\infty$ if $\epsilon>\frac{1}{2}$. Though the G-STG prior generalizes the Gaussian-gamma prior, it should be noted that the main algorithm based on the G-STG prior proposed in this paper works differently from that in \cite{pedersen2011sparse}.

\emph{3) $\tau\rightarrow+\infty$: } By l'Hospital's rule it can be shown that $\lim_{\tau\rightarrow+\infty}\Gamma_{\tau}\sbra{\alpha_i|\epsilon,\eta}=\eta e^{-\eta \alpha_i}$, i.e., the prior for $\alpha_i$ in the second layer approaches an exponential prior in such a case. Consequently, the proposed G-STG prior coincides with the Laplace prior as in \emph{Case 1}.

To visualize the variation of the G-STG prior with respect to the two parameters $\tau$ and $\epsilon$, we plot the PDF $p\sbra{x_i;\tau,\epsilon,\eta}$ in Fig. \ref{Fig:priorpdf} with $\eta=1$. Fig. \ref{Fig:priorpdf_tau} is for the case of $\epsilon=0.1$ and varying $\tau$. Obviously, the G-STG prior is a sparsity-inducing prior with the PDFs highly peaked at the origin, especially when $\tau\rightarrow0$. The main difference between the cases $\tau=0$ and $\tau=1\times10^{-8}$ is near the origin where $p\sbra{x_i}$ approaches infinity for $\tau=0$ while it is always finite for $\tau>0$. As $\tau$ gets larger, less density concentrates near the origin and the resulting prior gets closer to the Laplace prior that corresponds to $\tau=+\infty$. Fig. \ref{Fig:priorpdf_epsilon} is for the case of $\tau=1\times10^{-8}$ and varying $\epsilon$. It is shown that the G-STG prior gets less sparsity-inducing as $\epsilon$ gets larger, and that it ceases to promote sparsity as $\epsilon>1$. From the perspective of MAP estimation, the G-STG prior corresponds to a nonconvex optimization method as $\epsilon<1$ since the term $-\log p\sbra{\m{x}}$ is nonconvex in such a case.

\begin{figure}
\centering
  \subfigure[$\epsilon=0.1$]{
    \label{Fig:priorpdf_tau}
    \includegraphics[width=3.5in]{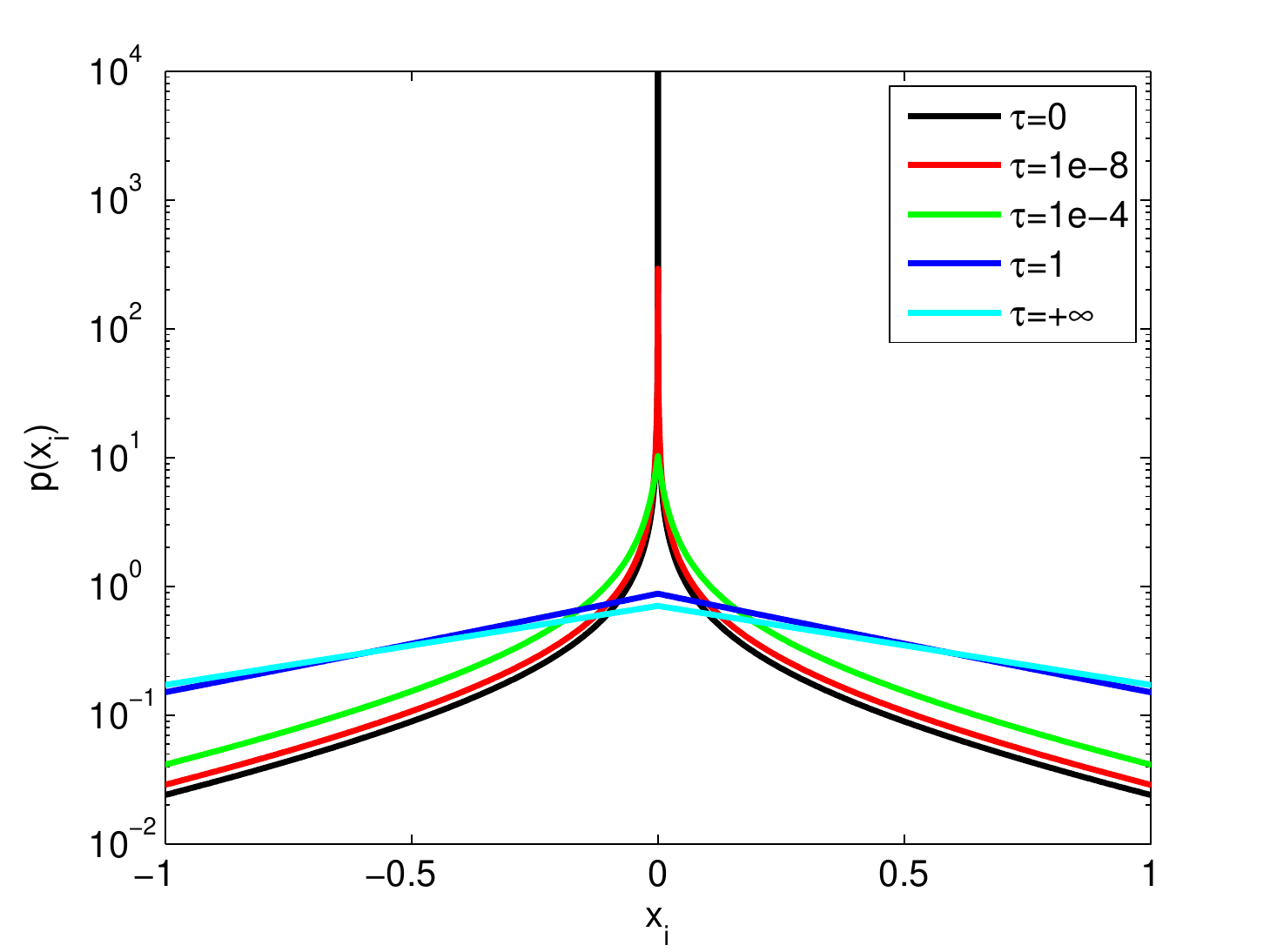}}
  \subfigure[$\tau=1\times10^{-8}$]{
    \label{Fig:priorpdf_epsilon}
    \includegraphics[width=3.5in]{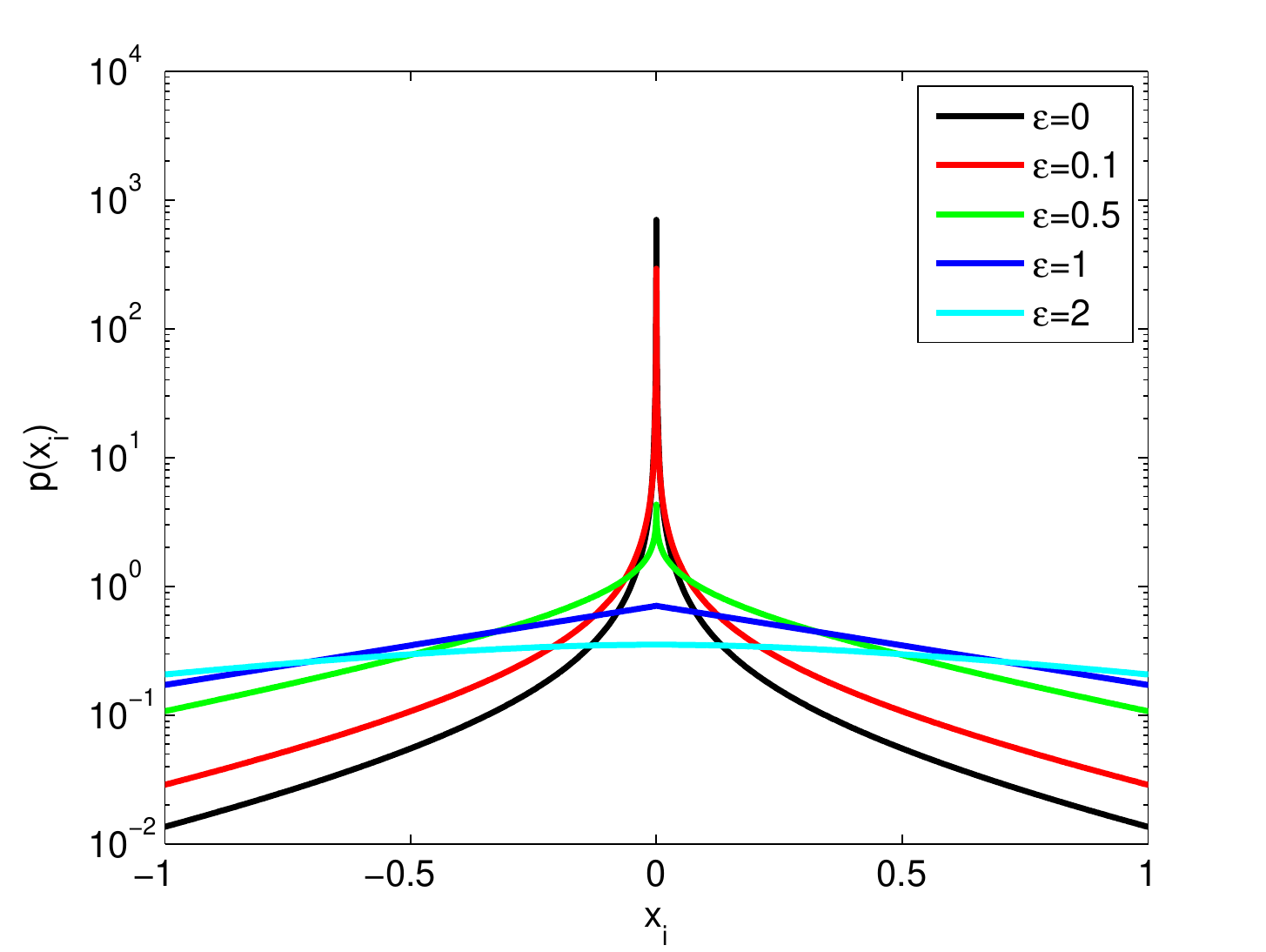}}
\centering
\caption{PDFs of the G-STG prior in the case of (a) $\epsilon=0.1$ and varying $\tau$ and (b) $\tau=1\times10^{-8}$ and varying $\epsilon$ with $\eta=1$.} \label{Fig:priorpdf}
\end{figure}

\subsection{Intuitive Interpretation and Threshold Parameter Setting} \label{sec:tau_setting}
Strictly speaking, a continuous prior is not suitable for sparse signals since any vector generated from a continuous prior is only approximately sparse (the probability of a zero-valued entry is zero). In the following, we provide an intuitive explanation about why the G-STG prior works for sparse signals by setting the threshold parameter $\tau$ according to the noise level.

We consider a Gaussian ensemble sensing matrix $\m{A}$ (the entries of $\m{A}$ are i.i.d. Gaussian $\cN\sbra{0,M^{-1}}$ where the variance is set to $M^{-1}$ to make columns of $\m{A}$ have expected unit norm). This matrix ensemble has been widely studied, e.g., in \cite{baraniuk2008simple,donoho2009message}. Then, consider a compressible signal $\m{z}$ and the observation model $\m{y}=\m{A}\m{z}$. In the Bayesian framework, we assume that $\m{z}$ is distributed according to some sparsity-inducing prior. Here we adopt the Gaussian-gamma prior in \cite{pedersen2011sparse}, i.e., we assume that $p\sbra{\m{z}|\m{\beta}}=\cN\sbra{\m{z}|\m{0},\m{\cB}}$ and $p\sbra{\m{\beta};\epsilon,\eta}=\prod_{i=1}^N\Gamma\sbra{\beta_i|\epsilon,\eta}$ where $\m{\cB}=\diag\sbra{\m{\beta}}$ and $\Gamma\sbra{\beta_i|\epsilon,\eta}$ refers to $\Gamma_{\tau}\sbra{\beta_i|\epsilon,\eta}$ with $\tau=0$. However, theoretical results \cite{candes2006stable,blumensath2009iterative} state that only significant entries of $\m{z}$ can be recovered while insignificant ones play as noises. So we write $\m{z}$ into $\m{z}=\m{x}+\m{w}$ where $\m{x}$ denotes the significant, ``recoverable'' component and $\m{w}$ refers to the insignificant, ``unrecoverable'' part. Then the observation model becomes $\m{y}=\m{A}\sbra{\m{x}+\m{w}}=\m{A}\m{x}+\m{e}$ with $\m{e}=\m{A}\m{w}$. By the structure of $\m{A}$, $\m{e}$ is a zero-mean AWGN with the noise variance $\sigma^2=M^{-1}\twon{\m{w}}^2$. We see that only the power of $\m{w}$ is reflected in the noise. That is, for any vector $\widetilde{\m{w}}$ satisfying $\twon{\widetilde{\m{w}}}^2=\twon{\m{w}}^2$, $\m{A}\widetilde{\m{w}}$ and $\m{e}$ are identically distributed. So we may replace $\m{w}$ by $\widetilde{\m{w}}$ in the observation model (i.e., $\m{x}+\widetilde{\m{w}}$ and $\m{z}$ are indistinguishable for CS approaches). Then we may model $\widetilde{\m{w}}$ as an i.i.d. zero-mean Gaussian vector with variance $\tau=N^{-1}\twon{\m{w}}^2=\sbra{M/N}\sigma^2$. In addition, $\widetilde{\m{w}}$ is independent of $\m{x}$. So, under the assumption of a Gaussian ensemble matrix $\m{A}$, a compressible signal $\m{z}$ is equivalent to the sum of its significant part $\m{x}$ plus a white Gaussian noise $\widetilde{\m{w}}$. Applying the Gaussian-gamma prior for $\m{z}$ to $\m{x}+\widetilde{\m{w}}$, we obtain that $\beta_i=\alpha_i+\tau$, $i=1,\cdots,N$, where $\m{\alpha}$ is as defined in (\ref{formu:prior_x}). Then we get $p\sbra{\alpha_i}=p\sbra{\beta_i-\tau|\beta_i\geq\tau}$ as a conditional distribution, resulting in that $p\sbra{\m{\alpha}}$ is in the exact form of (\ref{formu:prior_alpha}).

In this paper, we mainly consider the observation model $\m{y}=\m{A}\m{x}+\m{e}$ where $\m{x}$ is a sparse signal and $\m{e}$ is a zero-mean AWGN with known variance $\sigma^2$. The same sparsity-inducing prior for $\m{x}$ can be obtained by a reverse procedure and the details are omitted. So we can set the threshold parameter $\tau=\sbra{M/N}\sigma^2$ in the G-STG prior. Though this setting is only based on intuition without rigorous analysis, it indeed leads to good performance as to be reported via simulations in Section \ref{sec:simulation}, where it is also observed that this setting applies to other sensing matrix ensembles besides the Gaussian one.

\section{Sparse Bayesian Learning for Signal Recovery} \label{sec:algorithms}

\subsection{Bayesian Formulation}
In SBL, the signal of interest and noise are modeled as random variables. Under the common assumption of zero-mean AWGNs, i.e., $\m{e}\sim\cN\sbra{\m{0},\sigma^2\m{I}}$, where $\sigma^2$ is the noise variance, the PDF of the compressive measurements is
\equ{p\sbra{\m{y}|\m{x};\sigma^2}=\cN\sbra{\m{y}|\m{A}\m{x},\sigma^2\m{I}}.}
In this paper we assume that the noise variance $\sigma^2$ is known \emph{a priori}. This assumption has been widely made in the CS literature, e.g., \cite{candes2006stable,candes2011probabilistic}. Moreover, it is shown in \cite{wipf2007empirical} that to estimate $\sigma^2$ jointly with the signal recovery process (e.g., in \cite{tipping2001sparse}) can lead to very inaccurate estimate.

The G-STG prior introduced in Section \ref{sec:newsparseprior} is adopted as the sparsity-inducing prior for the sparse signal $\m{x}$. The hyperparameters $\tau$ and $\epsilon$ are chosen manually according to the reasoning in Section \ref{sec:newsparseprior} and Subsection \ref{sec:basisselectcondition}. Numerical simulations will be provided in Section \ref{sec:simulation} to illustrate their performance. To estimate $\eta$ from the measurements, we assume a gamma hyperprior for $\eta$: $p\sbra{\eta;c,d}=\Gamma\sbra{\eta|c,d}$, where we let $c,d\rightarrow0$ to obtain a uniform hyperprior (over a logarithmic scale). So the joint PDF of the observation model is
$p\sbra{\m{y},\m{x},\m{\alpha},\eta;\sigma^2,\tau,\epsilon,c,d}=p\sbra{\m{y}|\m{x};\sigma^2} p\sbra{\m{x}|\m{\alpha}}p\sbra{\m{\alpha}|\eta;\tau,\epsilon}p\sbra{\eta;c,d}$, where $\m{y}$ is the observation, $\m{x}$ is the unknown signal of interest, $\m{\alpha}$ and $\eta$ are unknown parameters, and $\sigma^2$, $\tau$, $\epsilon$, $c$ and $d$ are fixed. The task is to estimate $\m{x}$.

\subsection{Bayesian Inference} \label{sec:bayesianinference}
Note that the exact Bayesian inference is intractable since $p\sbra{\m{x}|\m{y}}$ is computationally intractable. Some approximations have to be made. Following from \cite{tipping2001sparse}, we decompose the posterior $p\sbra{\m{x},\m{\alpha},\eta|\m{y}}$ into two terms as
\equ{p\sbra{\m{x},\m{\alpha},\eta|\m{y}} = p\sbra{\m{x}|\m{y},\m{\alpha}} p\sbra{\m{\alpha},\eta|\m{y}}.}
The first term $p\sbra{\m{x}|\m{y},\m{\alpha}}$ is the posterior for $\m{x}$ given the hyperparameter $\m{\alpha}$, which will be later shown to be a Gaussian PDF. Then, we compute the most probable estimates of $\m{\alpha}$ and $\eta$, say $\m{\alpha}_{MP}$ and $\eta_{MP}$, that maximize the second term $p\sbra{\m{\alpha},\eta|\m{y}}$. We use $\m{\alpha}_{MP}$ to obtain the posterior for $\m{x}$. From the perspective of signal estimation, this is equivalent to requiring
\equ{p\sbra{\m{x}|\m{y}}=\int p\sbra{\m{x},\m{\alpha},\eta|\m{y}}d\m{\alpha}\,d\eta \approx p\sbra{\m{x}|\m{y},\m{\alpha}_{MP}},}
where $p\sbra{\m{x}|\m{y},\m{\alpha}_{MP}}$ refers to $p\sbra{\m{x}|\m{y},\m{\alpha}}$ at $\m{\alpha}_{MP}$. Similar procedures have been adopted in \cite{wipf2004sparse,babacan2010bayesian}. We will provide theoretical evidence to show that this approach leads to desirable properties in Section \ref{sec:analysis}. Simulation results presented in Section \ref{sec:simulation} also suggest that the signal recovery based on this approximation is very effective.

Since $p\sbra{\m{y}|\m{x}}$ and $p\sbra{\m{x}|\m{\alpha}}$ are both Gaussian, it is easy to show that the posterior for $\m{x}$ and the marginal distribution for $\m{y}$ are both Gaussian with
$p\sbra{\m{x}|\m{y},\m{\alpha}}= \cN\sbra{\m{x}|\m{\mu},\m{\Sigma}}$,  $p\sbra{\m{y}|\m{\alpha}}= \cN\sbra{\m{y}|\m{0},\m{C}}$, where
{\lentwo\equa{\m{\mu}
&=& \sigma^{-2}\m{\Sigma}\m{A}^T\m{y}, \label{formu:update_mu}\\ \m{\Sigma}
&=& \sbra{\sigma^{-2}\m{A}^T\m{A}+\m{\Lambda}^{-1}}^{-1}, \label{formu:update_Sigma}\\ \m{C}
&=& \sigma^2\m{I}+\m{A}\m{\Lambda}\m{A}^T.\label{formu:update_C}}
}The maximization of $p\sbra{\m{\alpha},\eta|\m{y}}$ is equivalent to that of $p\sbra{\m{y},\m{\alpha},\eta}=p\sbra{\m{y}|\m{\alpha}} p\sbra{\m{\alpha}|\eta}p\sbra{\eta}$ by the relation $p\sbra{\m{\alpha},\eta|\m{y}}=p\sbra{\m{y},\m{\alpha},\eta}/p\sbra{\m{y}}$. We consider $\log\eta$ as the hidden variable instead of $\eta$ since the uniform hyperprior is assumed over a logarithmic scale. By $p\sbra{\log\eta}=\eta p\sbra{\eta}$ we see that the hyperprior for $\eta$ leads to a noninformative prior by setting $c=d=0$. So the log-likelihood function is
\equ{\begin{split}
&\cL\sbra{\m{\alpha},\log\eta}\\
&= \log p\sbra{\m{y},\m{\alpha},\log\eta} \\
&= -\frac{1}{2}\log\abs{\m{C}}-\frac{1}{2}\m{y}^T\m{C}^{-1}\m{y} \\ &\quad+\sbra{\epsilon-1}\sum_{i=1}^N\log\sbra{\alpha_i+\tau}-\eta\sum_{i=1}^N\sbra{\alpha_i+\tau}\\
&\quad+ \sbra{N\epsilon+c}\log\eta-N\log\Gamma_{\eta\tau}\sbra{\epsilon} -d\eta + C_1, \end{split} \label{formu:likelihood}}
where $C_1$ is a constant. The maximizer of $\cL$ will be analyzed in Subsection \ref{sec:analysis}. In the following we provide an iterative procedure to maximize $\cL$ by recognizing the identities $\log\abs{\m{C}}=\log\abs{\m{\Lambda}} + M\log\sigma^2-\log\abs{\m{\Sigma}}$ and $\m{y}^T\m{C}^{-1}\m{y}= \sigma^{-2}\twon{\m{y}-\m{A}\m{\mu}}^2 + \m{\mu}^T\m{\Lambda}^{-1}\m{\mu}$.

\subsubsection{Update of $\m{\alpha}$}For $\alpha_i$, $i=1,\cdots,N$, we have
\equ{\begin{split}\frac{\partial\cL}{\partial\alpha_i}
&=-\frac{1}{2\alpha_i}+\frac{E\lbra{x_i^2}}{2\alpha_i^2} +\frac{\epsilon-1}{\alpha_i+\tau}-\eta\\
&=-\frac{f\sbra{\alpha_i}}{2\alpha_i^2\sbra{\alpha_i+\tau}},\end{split}}
where $f\sbra{t}=2\eta t^3+\sbra{3-2\epsilon+2\eta\tau}t^2 +\sbra{\tau-E\lbra{x_i^2}}t-\tau E\lbra{x_i^2}$ for $t\in\bR$ is a cubic function and $E\lbra{x_i^2}=\mu_i^2+\Sigma_{ii}$, where $\Sigma_{ii}$ is the $i$th diagonal entry of $\m{\Sigma}$. We need the following lemma.

\begin{lem} For a cubic function $g\sbra{t}=\lambda_1t^3+\lambda_2t^2+\lambda_3t+\lambda_4$, if $\lambda_1, \lambda_2>0$ and $\lambda_4<0$, then $g\sbra{t}=0$ has a unique root on $\left(0,+\infty\right)$. \label{lem:rubicroot}
\end{lem}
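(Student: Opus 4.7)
My proof splits into existence and uniqueness of a positive root of $g$.

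For existence I would just evaluate $g$ at the two endpoints of $(0,+\infty)$: since $g(0)=\lambda_4<0$ and $\lim_{t\to+\infty}g(t)=+\infty$ (because the leading coefficient $\lambda_1>0$ dominates at large $t$), continuity of $g$ together with the intermediate value theorem guarantees at least one root in $(0,+\infty)$.

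For uniqueness I would analyze the sign of the derivative $g'(t)=3\lambda_1 t^2+2\lambda_2 t+\lambda_3$ on $(0,+\infty)$, treating two cases on $\lambda_3$. If $\lambda_3\geq 0$, every coefficient of $g'$ is nonnegative and the leading one is strictly positive, so $g'(t)>0$ for all $t>0$, meaning $g$ is strictly increasing on $(0,+\infty)$ and therefore has at most one, hence exactly one, zero there. If $\lambda_3<0$, then $g'$ is a quadratic with positive leading coefficient satisfying $g'(0)=\lambda_3<0$; by Vieta's formulas its two real roots have product $\lambda_3/(3\lambda_1)<0$, so exactly one of them, call it $t_0$, lies in $(0,+\infty)$. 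Then $g$ is strictly decreasing on $(0,t_0)$ starting from $g(0)=\lambda_4<0$, so $g<0$ on that interval; and strictly increasing on $(t_0,+\infty)$ from $g(t_0)\le g(0)<0$ up to $+\infty$, giving a unique zero located in $(t_0,+\infty)$.

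A slicker alternative I would mention is Descartes' rule of signs applied directly to $g$: the coefficient sequence $(\lambda_1,\lambda_2,\lambda_3,\lambda_4)$ starts with two positives and ends with a negative, so it exhibits exactly one sign change regardless of the sign of $\lambda_3$, which bounds the number of positive real roots (counted with multiplicity) by one; combined with the existence argument, this yields uniqueness at once. I do not expect any real obstacle: the only mildly delicate case is $\lambda_3<0$, and the crucial fact there is simply that the local minimum $g(t_0)$ of $g$ on $(0,+\infty)$ sits below $g(0)=\lambda_4<0$, which is immediate from monotonicity.
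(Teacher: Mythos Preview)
Your proof is correct. The existence step is identical to the paper's. For uniqueness, however, the paper argues by contradiction rather than by your direct monotonicity analysis: it observes that if $g$ had more than one positive root it would have to have three (since the product of the roots is $-\lambda_4/\lambda_1>0$), hence $g'$ would have two positive stationary points; but by Vieta the sum of the roots of $g'(t)=3\lambda_1t^2+2\lambda_2t+\lambda_3$ equals $-2\lambda_2/(3\lambda_1)<0$, a contradiction. So the paper leverages $\lambda_2>0$ through the \emph{sum} of the roots of $g'$, whereas your case analysis hinges on the sign of $\lambda_3$ and the \emph{product} of the roots of $g'$. Your route is slightly longer but yields extra information (the root lies in $(t_0,+\infty)$ when $\lambda_3<0$); the paper's contradiction is more compact. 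Your Descartes' rule alternative is the shortest of all and dispenses with $g'$ entirely.
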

\begin{proof} By $g(0)=\lambda_4<0$ and $\lim_{t\rightarrow+\infty}g\sbra{t}=+\infty$ there exists at least one root in $\left(0,+\infty\right)$. We show that this root is unique using contradiction. Suppose there exists more than one positive root. Then there must exist three positive roots and that $g\sbra{t}$ has two positive stationary points. That is, the two solutions of $\frac{dg(t)}{dt}=3\lambda_1t^2+2\lambda_2t+\lambda_3=0$ are both positive, resulting in that $\lambda_2<0$ (contradiction).
\end{proof}

By Lemma \ref{lem:rubicroot}, it is easy to show that the maximum of $\cL$ is achieved at the unique positive root, say $\alpha_i^*>0$, of $f\sbra{\alpha_i}=0$. We note that explicit expressions are available for the roots of a cubic function and hence $\alpha_i$, $i=1,\cdots,N$, can be efficiently updated.

\subsubsection{Update of $\eta$} In general, there is no explicit expression for updating $\eta$. Since the first and second derivatives of $\cL$ with respect to $\log\eta$ can be easily computed, $\cL$ can be efficiently maximized with respect to $\log\eta$ using numerical methods, e.g., gradient ascending method or Newton's method. In addition, the computational complexity hardly depends on the CS problem dimension.

\begin{rem} The computation of the incomplete gamma function $\Gamma_{\eta\tau}\sbra{\epsilon}$ is involved in the update of $\eta$. This term can be efficiently computed using functions provided in Matlab if $\epsilon$ is properly bounded away from zero. But a numerical integration is needed if $\epsilon=0$. In Section \ref{sec:simulation}, we observe through simulations that the update of $\eta$ may take considerably long time in the case of $\epsilon=0$. But such differences are negligible in the case of a high-dimensional CS problem since the computation of $\eta$ hardly depends on the problem dimension unlike other computations, such as the update of $\m{\alpha}$. \label{rem:timeatepsilonis0}
\end{rem}



As a result, an iterative algorithm can be implemented to obtain $\m{\alpha}_{MP}$ and $\eta_{MP}$ by iteratively updating $\m{\Sigma}$ in (\ref{formu:update_Sigma}), $\m{\mu}$ in (\ref{formu:update_mu}), $\m{\alpha}$ and $\eta$. It is easy to show that this algorithm can be implemented using an EM algorithm \cite{mclachlan2008algorithm}. So the likelihood $\cL$ increases monotonically at each iteration and the algorithm is guaranteed to converge. After convergence, the signal $\m{x}$ is estimated using its posterior mean $\m{\mu}$. One shortcoming of the iterative algorithm is that at each iteration a high-dimensional matrix inversion has to be calculated for updating $\m{\Sigma}$ though this computation can be possibly alleviated using the Woodbury matrix identity.

\subsection{Analysis of Global and Local Maxima} \label{sec:analysis}
We analyze the global and local maxima of the likelihood $\cL$ in (\ref{formu:likelihood}) in this subsection. Our analysis is rooted in \cite{wipf2004sparse} and shows that the theoretical results on the basic SBL in \cite{wipf2004sparse} can be extended to our case with necessary modifications. In the following, we assume that $c=d=0$ and $\eta>0$ is fixed (it is a similar case if $\eta$ is chosen to maximize $\cL$ as well). Thus we may write $\cL$ (with respect to $\m{\alpha}$) as
\equ{\begin{split}\cL\sbra{\m{\alpha}}
&= -\frac{1}{2}\log\abs{\m{C}}-\frac{1}{2}\m{y}^T\m{C}^{-1}\m{y} \\ &\quad+\sbra{\epsilon-1}\sum_{i=1}^N\log\sbra{\alpha_i+\tau}-\eta\sum_{i=1}^N\alpha_i + C_4, \end{split} \label{formu:likelihood2}}
where $C_4$ is a constant independent of $\m{\alpha}$.
We first consider the global maxima in the noise free case.

\begin{thm} Let $\tau\geq0$, $0\leq\epsilon\leq1$, $\m{e}=0$ and $\sigma^2=0$. Assume that $\m{x}^0$ satisfying $\zeron{\m{x}^0}<M<N$ is the maximally sparse solution to the linear equation $\m{y}=\m{A}\m{x}$. Then, there exists some $\m{\alpha}^0$ with $\twon{\m{\alpha}^0}<+\infty$ and $\zeron{\m{\alpha}^0}=\zeron{\m{x}^0}$ such that at $\m{\alpha}^0$, $\cL$ is globally maximized and the corresponding posterior mean is $\m{\mu}=\m{x}^0$. \label{thm:globalmaxima}
\end{thm}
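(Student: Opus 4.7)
The strategy is to adapt the noise-free global-maximum analysis of \cite{wipf2004sparse} to the modified cost $\cL\sbra{\m{\alpha}}$ in (\ref{formu:likelihood2}). Let $\cS = \supp\sbra{\m{x}^0}$ with $k := \abs{\cS} < M$. Because $\m{x}^0$ is the maximally sparse solution of $\m{A}\m{x} = \m{y}$, the submatrix $\m{A}_\cS$ must have full column rank: otherwise a null-space vector supported in $\cS$ would allow trading an entry of $\m{x}^0$ for a strictly sparser solution, contradicting maximality. Define $\alpha_i^0 = \sbra{x_i^0}^2$ for $i \in \cS$ and $\alpha_i^0 = 0$ for $i \notin \cS$; then $\twon{\m{\alpha}^0} < \infty$ and $\zeron{\m{\alpha}^0} = k$. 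The identity $\m{\mu} = \m{x}^0$ follows from (\ref{formu:update_mu})--(\ref{formu:update_Sigma}): for $i \notin \cS$, $\alpha_i^0 = 0$ forces the $i$th row and column of $\m{\Sigma}$ to vanish, hence $\mu_i = 0$; restricting (\ref{formu:update_Sigma}) to $\cS$ and letting $\sigma^2 \to 0^+$ gives $\m{\Sigma}_\cS \sim \sigma^2 \sbra{\m{A}_\cS^T\m{A}_\cS}^{-1}$, so $\m{\mu}_\cS = \sbra{\m{A}_\cS^T\m{A}_\cS}^{-1}\m{A}_\cS^T\m{y} = \m{x}_\cS^0$ via $\m{y} = \m{A}_\cS\m{x}_\cS^0$ and full column rank of $\m{A}_\cS$.

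For global maximality, I would split $\cL$ into the classical SBL data-fit $-\frac{1}{2}\log\abs{\m{C}} - \frac{1}{2}\m{y}^T\m{C}^{-1}\m{y}$ and the new prior contribution $\sum_{i=1}^N\mbra{\sbra{\epsilon-1}\log\sbra{\alpha_i+\tau} - \eta\alpha_i}$. Interpreted as the limit $\sigma^2 \to 0^+$, the data-fit behaves like $-\frac{M - r\sbra{\m{\alpha}}}{2}\log\sigma^2$ plus finite terms, where $r\sbra{\m{\alpha}}$ denotes the rank of $\m{A}\m{\Lambda}\m{A}^T$, provided $\m{y}$ lies in its range; otherwise the quadratic form drives $\cL \to -\infty$. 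Maximum sparsity of $\m{x}^0$ yields $r\sbra{\m{\alpha}} \geq k$, with equality forcing $\supp\sbra{\m{\alpha}} = \cS$. For $0 \leq \epsilon \leq 1$, each prior summand has derivative $-\sbra{1-\epsilon}/\sbra{\alpha_i+\tau} - \eta \leq 0$ on $\alpha_i \geq 0$, so it is maximized at $\alpha_i = 0$---with the finite value $\sbra{\epsilon-1}\log\tau$ when $\tau > 0$, or diverging to $+\infty$ when $\tau = 0$ and $\epsilon < 1$. Thus the data-fit and prior are simultaneously optimized at $\m{\alpha}^0$.

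The main obstacle is the same singular-limit subtlety handled in \cite{wipf2004sparse}: $\abs{\m{C}}$ and $\m{C}^{-1}$ misbehave when $\sigma^2 = 0$ and $\m{\Lambda}$ is rank-deficient, so a careful $\sigma^2 \to 0^+$ limiting argument is required to rank the divergence of $-\frac{1}{2}\log\abs{\m{C}}$ against the constraint that $\m{y}^T\m{C}^{-1}\m{y}$ remain bounded. The novelty here is checking that the added prior terms do not upset the Wipf-Rao ranking, which reduces to the monotonicity argument above and crucially uses $0 \leq \epsilon \leq 1$. A secondary but essential ingredient is the full-column-rank property of $\m{A}_\cS$, which follows from maximum sparsity of $\m{x}^0$ together with $k < M$.
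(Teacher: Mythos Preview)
Your proposal is essentially correct and follows the same Wipf--Rao route as the paper: construct $\m{\alpha}^0$ supported exactly on $\cS=\supp\sbra{\m{x}^0}$, verify $\m{\mu}=\m{x}^0$ in the $\sigma^2\to 0^+$ limit, and show that the data-fit term $-\frac{1}{2}\log\abs{\m{C}}$ diverges to $+\infty$ while $\m{y}^T\m{C}^{-1}\m{y}$ and the prior terms remain controlled.

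The paper's execution is more economical than yours. Once one checks that $\cL\sbra{\m{\alpha}^0}=+\infty$---via $\abs{\m{C}}\to 0$, the bound $\m{y}^T\m{C}^{-1}\m{y}\leq\twon{\m{x}^0}^2/C_6$, and finiteness of the prior terms at $\m{\alpha}^0$---global maximality is immediate, so your divergence-rate comparison through $r\sbra{\m{\alpha}}$ is unnecessary for the theorem as stated (which asks only for \emph{existence} of a maximizer with the given sparsity). Two of your side claims are also overreaches: the phrase ``data-fit and prior are simultaneously optimized at $\m{\alpha}^0$'' is inaccurate, since the prior summand prefers $\alpha_i=0$ on $\cS$ as well---what actually matters is that the prior is \emph{bounded} at $\m{\alpha}^0$, which your monotonicity argument does establish; and the assertion that $r\sbra{\m{\alpha}}=k$ forces $\supp\sbra{\m{\alpha}}=\cS$ presumes uniqueness of the maximally sparse solution, which the theorem does not assume and the conclusion does not need. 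None of these are fatal gaps; they are simply extra machinery that the paper's direct ``$\cL\sbra{\m{\alpha}^0}=+\infty$'' argument avoids.
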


\begin{proof} Note first that $\lim_{\sigma^2\rightarrow0}\m{\mu}=\m{\Lambda}^{\frac{1}{2}} \sbra{\m{A}\m{\Lambda}^{\frac{1}{2}}}^\dagger\m{y}=\m{\Lambda}^{\frac{1}{2}} \sbra{\m{A}\m{\Lambda}^{\frac{1}{2}}}^\dagger\m{A}\m{x}^0$ by (\ref{formu:update_mu}). Since $\zeron{\m{x}^0}<M$ there must exist $\m{\alpha}^0\in\bR^N$, $\twon{\m{\alpha}^0}<+\infty$ and $\zeron{\m{\alpha}^0}=\zeron{\m{x}^0}$, such that $\m{\mu}=\m{x}^0$ at $\m{\alpha}=\m{\alpha}^0$. In fact, $\m{\alpha}^0$ can be any vector satisfying that for $j=1,\cdots,N$, $\alpha_j^0>0$ if $x_j^0\neq0$, and $\alpha_j^0=0$ otherwise. In the following we show that the global maximum of $\cL$ is achieved at $\m{\alpha}=\m{\alpha}^0$. Let $C_5=\inftyn{\m{\alpha}^0}<\infty$ and denote $C_6>0$ the minimum nonzero $\alpha_i$, $i=1,\cdots,N$. Following from the proof of \cite[Theorem 1]{wipf2004sparse}, we have $\abs{\m{C}}=\abs{\sigma^2\m{I}+\m{A}\m{\Lambda}\m{A}^T}\rightarrow0$ and $\m{y}^T\m{C}^{-1}\m{y}\leq\twon{\m{x}^0}^2/C_6$ if $\m{\alpha}\rightarrow\m{\alpha}^0$. In addition, we have $\sum_{i=1}^N\log\sbra{\alpha_i+\tau}\leq N\log\sbra{C_5+\tau}$ and $\sum_{i=1}^N\alpha_i\leq NC_5$. So we have $\cL=+\infty$ at $\m{\alpha}=\m{\alpha}^0$, which completes the proof.
\end{proof}

Theorem \ref{thm:globalmaxima} shows that the global maximum of the objective function is achieved at the maximally sparse solution. Thus the proposed approach has no structural errors and the remaining question of whether the algorithm can produce this solution is a convergence issue. Further, we have the following result which is similar to that in \cite[Theorem 2]{wipf2004sparse}.

\begin{thm} Let $\tau\geq0$ and $0\leq\epsilon\leq1$. Every local maximum of $\cL$ is achieved at a sparse $\m{\alpha}$ with $\zeron{\m{\alpha}}\leq M$ that leads to a sparse posterior mean $\m{\mu}$ with $\zeron{\m{\mu}}\leq M$, regardless of the existence of noise. \label{thm:localmaxima}
\end{thm}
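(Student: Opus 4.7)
My plan is to adapt the argument of Wipf and Rao for basic SBL \cite{wipf2004sparse}, treating the two G-STG-induced terms $(\epsilon-1)\sum_i\log(\alpha_i+\tau)$ and $-\eta\sum_i\alpha_i$ in $\cL$ as a concave perturbation of the basic-SBL cost $\cL_0(\m{\alpha})=-\tfrac{1}{2}\log\abs{\m{C}}-\tfrac{1}{2}\m{y}^T\m{C}^{-1}\m{y}$. I argue by contradiction: suppose $\hat{\m{\alpha}}$ is a local maximum of $\cL$ with $k:=\zeron{\hat{\m{\alpha}}}>M$, and let $S=\supp\sbra{\hat{\m{\alpha}}}$. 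Since $\hat\alpha_i>0$ for $i\in S$, every perturbation supported on $S$ is bidirectionally feasible for small step size, so first-order stationarity gives $\partial\cL/\partial\alpha_i=0$ on $S$ and the Hessian of $\cL$ restricted to $S$ is negative semidefinite. The argument is identical whether $\sigma^2>0$ or $\sigma^2=0$, since only invertibility of $\m{C}$ on the span we use is required, and $\m{C}=\m{A}\m{\Lambda}\m{A}^T$ remains invertible generically when $k>M$.

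The geometric input is that $\m{A}_S\in\bR^{M\times k}$ has a nontrivial right null space of dimension at least $k-M$; I pick any nonzero $\m{u}\in\bR^k$ with $\m{A}_S\m{u}=\m{0}$ and embed $\m{u}$ by zeros off $S$. Along the family $\m{\alpha}(t)=\hat{\m{\alpha}}+t\m{u}$, the Wipf-Rao computation of the variation of $\cL_0$ shows that such null-direction perturbations do not strictly decrease $\cL_0$ (the same mechanism that delivers the $\zeron{\m{\alpha}}\leq M$ bound for basic SBL). For the extra term $g(\m{\alpha})=(\epsilon-1)\sum_i\log(\alpha_i+\tau)-\eta\sum_i\alpha_i$, each partial derivative $\partial g/\partial\alpha_i=(\epsilon-1)/(\alpha_i+\tau)-\eta$ is strictly negative since $\eta>0$ and $\epsilon\leq 1$, so $g$ is strictly componentwise decreasing; unless the linear form $\sum_{i\in S}u_i\,\partial g/\partial\alpha_i$ vanishes, one of the signs of $\pm t$ strictly increases $g$ and therefore $\cL$, contradicting local maximality. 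If that form does vanish, I pass to the second-order expansion and invoke strict concavity of $g$ (valid for $\epsilon<1$), or replace $\m{u}$ by another element of the null space of $\m{A}_S$ chosen so that the form is nonzero, which is always possible since the null space has dimension at least $k-M\geq 1$. The sparsity of $\m{\mu}$ then follows immediately from $\mu_i=\hat\alpha_i\,\sbra{\m{a}_i^T\m{C}^{-1}\m{y}}$, which forces $\zeron{\m{\mu}}\leq\zeron{\hat{\m{\alpha}}}\leq M$.

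The main obstacle I anticipate is making the $\cL_0$ step fully rigorous in this setting: the Wipf-Rao argument relies on the fact that $\cL_0$ depends on $\m{\alpha}$ only through $\m{C}(\m{\alpha})$, which lets them quotient by the equivalence class of $\m{\alpha}$'s producing the same $\m{C}$; the G-STG extra terms break this invariance, so any null direction of $\m{A}_S$ must be shown to yield a genuine ascent for $\cL$ and not merely a neutral direction for $\cL_0$. The cleanest path is to combine the second-order variation of $\cL_0$ along the null direction with the stationarity identity $q_i^2=s_i+2(1-\epsilon)/(\hat\alpha_i+\tau)+2\eta$ for $i\in S$ (obtained from $\partial\cL/\partial\alpha_i=0$ with $s_i=\m{a}_i^T\m{C}^{-1}\m{a}_i$ and $q_i=\m{a}_i^T\m{C}^{-1}\m{y}$), and to close the argument using the strict coordinatewise monotonicity of $g$ to force strict first- or second-order ascent on one side of the null direction.
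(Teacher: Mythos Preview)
Your approach has a genuine gap in how it handles $\cL_0$. A direction $\m{u}$ in the null space of $\m{A}_S$ (i.e., $\m{A}_S\m{u}=\m{0}$) does \emph{not} leave $\m{C}=\sigma^2\m{I}+\m{A}\m{\Lambda}\m{A}^T$ invariant: along $\m{\alpha}(t)=\hat{\m{\alpha}}+t\m{u}$ one has $\m{C}(t)=\m{C}(\hat{\m{\alpha}})+t\,\m{A}_S\diag(\m{u}_S)\m{A}_S^T$, and $\sum_{i\in S}u_i\m{a}_i\m{a}_i^T$ need not vanish when $\sum_{i\in S}u_i\m{a}_i=\m{0}$ (a simple $2\times 3$ example already shows this). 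Hence your claim that ``null-direction perturbations do not strictly decrease $\cL_0$'' is unsupported, and the first-order argument via the monotonicity of $g$ cannot be decoupled from the variation of $\cL_0$. The proposed patches do not close the gap either: the second-order/strict-concavity fallback for $g$ fails at $\epsilon=1$ (which is allowed by the theorem), and the assertion that one can ``always'' choose another null vector making the linear form nonzero is unjustified, since the linear functional $\m{u}\mapsto\sum_{i\in S}u_i\,\partial g/\partial\alpha_i$ may vanish identically on the one-dimensional kernel when $k=M+1$.

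The paper's proof does not use null-space perturbations at all; the actual Wipf--Rao mechanism is a convex/concave splitting together with a polytope extreme-point argument. One observes that
\[
q(\m{\alpha})=-\tfrac{1}{2}\log\abs{\m{C}}+(\epsilon-1)\sum_i\log(\alpha_i+\tau)-\eta\sum_i\alpha_i
\]
is \emph{convex} in $\m{\alpha}\in\bR^N_+$ whenever $\epsilon\leq 1$ (since $\log\det$ is concave on positive-definite matrices and $\m{C}$ is affine in $\m{\alpha}$, while each G-STG term is convex), whereas the remaining data-fit term $-\tfrac{1}{2}\m{y}^T\m{C}^{-1}\m{y}$ is concave. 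The Wipf--Rao construction then supplies a closed bounded polytope $\cP\subset\bR^N_+$ containing $\hat{\m{\alpha}}$ on which this concave term is \emph{constant} and whose extreme points all have support size at most $M$. On $\cP$ the full objective $\cL$ agrees with the convex $q$ up to an additive constant, so a local maximizer of $\cL$ restricted to $\cP$ must lie at an extreme point. The G-STG extra terms are absorbed into the convex part $q$, which is why the basic-SBL proof extends verbatim once $\epsilon\leq 1$ is assumed; no separate perturbation analysis of $g$ is needed. Your derivation of $\zeron{\m{\mu}}\leq\zeron{\m{\alpha}}$ via $\mu_i=\alpha_i\m{a}_i^T\m{C}^{-1}\m{y}$ is correct and matches the paper.
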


\begin{proof} Note that $\zeron{\m{\mu}}\leq\zeron{\m{\alpha}}$ since $\mu_i\rightarrow0$ as $\alpha_i\rightarrow0$, $i=1,\cdots,N$. So we need only to show that every local maximum of $\cL$ is achieved at a sparse $\m{\alpha}$ with $\zeron{\m{\alpha}}\leq M$. Let $q\sbra{\m{\alpha}}=-\frac{1}{2}\log\abs{\m{C}}+\sbra{\epsilon-1}\sum_{i=1}^N \log\sbra{\alpha_i+\tau}-\eta\sum_{i=1}^N\alpha_i$, which is convex with respect to $\m{\alpha}\in\bR^N_+$ if $\epsilon\leq1$. Suppose that $\m{\alpha}^*$ is a local maximum point of $\cL$. We may construct a closed, bounded convex polytope $\cP\subset\bR^N_+$ following from \cite{wipf2004sparse} (we omit the details) such that $\m{\alpha}^*\in\cP$ and if $\m{\alpha}\in\cP$, then the second term of $\cL$, $-\frac{1}{2}\m{y}^T\m{C}^{-1}\m{y}$, equals a constant $C_7=-\frac{1}{2}\m{y}^T\sbra{\sigma^2\m{I}+\m{A}\diag\sbra{\m{\alpha}^*}\m{A}^T}^{-1}\m{y}$. In addition, all extreme points of $\cP$ are sparse with support size no more than $M$. As a result, $\m{\alpha}^*$ is a local maximum point of $q\sbra{\m{\alpha}}$ with respect to $\m{\alpha}\in\cP$. By the convexity of $q\sbra{\m{\alpha}}$, $\m{\alpha}^*$ must be an extreme point of $\cP$ with $\zeron{\m{\alpha}^*}\leq M$.
\end{proof}

Theorem \ref{thm:localmaxima} states that all local maxima of $\cL$ are achieved at sparse solutions. Since we can locally maximize $\cL$ efficiently in practice, based on Theorem \ref{thm:localmaxima}, we introduce a fast algorithm in Section \ref{sec:greedyalg} that searches for a sparse solution that locally maximizes $\cL$.

\section{Fast Algorithm} \label{sec:greedyalg}

\subsection{Fast Greedy Algorithm} \label{sec:greedyalgorithm}
Based on Theorem \ref{thm:localmaxima} in Section \ref{sec:analysis}, the following algorithm aims to find a sparse solution that locally maximizes $\cL$. We consider the contribution of a single basis vector $\m{a}_j$ (the $j$th column of $\m{A}$), $j=1,\cdots,N$, and determine whether it should be included in the model (or in the active set) for the maximization of $\cL$. Denote $\m{C}_{-j}=\sigma^2\m{I}+\sum_{i\neq j}\alpha_i\m{a}_i\m{a}_i^T=\m{C}-\alpha_j\m{a}_j\m{a}_j^T$ that is independent of the $j$th basis vector $\m{a}_j$. Then we have $\abs{\m{C}}=\abs{\m{C}_{-j}} \abs{1+\alpha_j\m{a}_j^T\m{C}_{-j}^{-1}\m{a}_j}$ and $\m{C}^{-1}=\m{C}_{-j}^{-1}- \sbra{\alpha_j^{-1}+\m{a}_j^T\m{C}_{-j}^{-1}\m{a}_j}^{-1} \m{C}_{-j}^{-1}\m{a}_j\m{a}_j^T\m{C}_{-j}^{-1}$. Using the two identities above, we rewrite the log-likelihood function (with respect to $\m{\alpha}$) into
\equ{\begin{split}
\cL\sbra{\m{\alpha}}
&= -\frac{1}{2}\log\abs{\m{C}_{-j}}-\frac{1}{2}\m{y}^T\m{C}_{-j}^{-1}\m{y} \\ &\quad+\sbra{\epsilon-1}\sum_{i=1}^N\log\sbra{\alpha_i+\tau}-\eta\sum_{i=1}^N\alpha_i \\
&\quad-\frac{1}{2}\log\abs{1+\alpha_j\m{a}_j^T\m{C}_{-j}^{-1}\m{a}_j} +\frac{q_j^2}{2\sbra{\alpha_j^{-1}+s_j}}+C_2\\
&=\cL\sbra{\m{\alpha}_{-j}}+\ell\sbra{\alpha_j}+C_3,\end{split}}
where $\cL\sbra{\m{\alpha}_{-j}}$ denotes $\cL\sbra{\m{\alpha}}$ after removing the contribution of the $j$th basis vector, $\ell\sbra{\alpha_j}=-\frac{1}{2}\log\abs{1+\alpha_js_j} +\frac{q_j^2}{2\sbra{\alpha_j^{-1}+s_j}} +\sbra{\epsilon-1}\log\sbra{\frac{\alpha_j}{\tau}+1}-\eta\alpha_j$ with $s_j=\m{a}_j^T\m{C}_{-j}^{-1}\m{a}_j$ and $q_j=\m{a}_j^T\m{C}_{-j}^{-1}\m{y}$, and $C_2$, $C_3$ are constants independent of $\m{\alpha}$. Note that $\ell\sbra{\alpha_i}$ has been modified by a constant such that $\ell\sbra{0}=0$. In the following we compute the maximum point, say $\alpha_j^*$, of $\ell\sbra{\alpha_j}$ on $\left[0,+\infty\right)$. If $\alpha_j^*>0$, then the basis vector $\m{a}_j$ should be preserved in the active set since it gives positive contribution to the likelihood. Otherwise, it should be removed from the active set (by setting $\alpha_j=0$). We consider only the general case $\tau>0$, $0\leq\epsilon\leq1$ and $\eta>0$ since it is simple for other cases. First we have
\equ{\begin{split}\frac{d\,\ell\sbra{\alpha_j}}{d\,\alpha_j}
&=-\frac{s_j}{2\sbra{1+\alpha_js_j}} +\frac{q_j^2}{2\sbra{1+\alpha_js_j}^2} +\frac{\epsilon-1}{\alpha_j+\tau}-\eta\\
&= -\frac{h\sbra{\alpha_j}}{2\sbra{\alpha_j+\tau}\sbra{1+\alpha_js_j}^2},\end{split}}
where $h\sbra{\alpha_j}=c_1\alpha_j^3+c_2\alpha_j^2+c_3\alpha_j+c_4$ is a cubic function of $\alpha_j$ with
{\lentwo\equa{c_1
&=& 2\eta s_j^2,\\ c_2
&=& \sbra{3-2\epsilon}s_j^2+4\eta s_j+2\eta\tau s_j^2,\\ c_3
&=& \sbra{5-4\epsilon}s_j+2\eta-q_j^2+\tau\sbra{4\eta s_j+s_j^2},\\ c_4
&=& 2-2\epsilon+\tau\sbra{s_j+2\eta-q_j^2}.}
}To compute the maximum point of $\ell\sbra{\alpha_j}$, we need the following result.
\begin{lem}[\cite{irving2004integers}] For a cubic function $g\sbra{t}=\lambda_1t^3+\lambda_2t^2+\lambda_3t+\lambda_4$, let the discriminant $\Delta=18\lambda_1\lambda_2\lambda_3\lambda_4 -4\lambda_2^3\lambda_4 +\lambda_2^2\lambda_3^2 -4\lambda_1\lambda_3^3 -27\lambda_1^2\lambda_4^2$. If $\Delta>0$, then $g\sbra{t}=0$ has three distinct real roots. If $\Delta=0$, then the equation has three real roots including a multiple root. If $\Delta<0$, then the equation has one real root and two complex conjugate roots. \label{lem:discriminant}
\end{lem}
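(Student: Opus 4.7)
The plan is to reduce the lemma to the classical identity that expresses the discriminant as a product of squared root differences, and then to deduce the three sign statements by a short case analysis rooted in the complex-conjugate root theorem. Specifically, if $r_1,r_2,r_3\in\bC$ denote the roots of $g\sbra{t}$ counted with multiplicity, I would first establish
\[
\Delta \;=\; \lambda_1^4\sbra{r_1-r_2}^2\sbra{r_1-r_3}^2\sbra{r_2-r_3}^2.
\]
This identity is verified by expanding the right-hand side as a symmetric polynomial in $r_1,r_2,r_3$ and substituting Vieta's relations $r_1+r_2+r_3=-\lambda_2/\lambda_1$, $r_1r_2+r_1r_3+r_2r_3=\lambda_3/\lambda_1$, and $r_1r_2r_3=-\lambda_4/\lambda_1$; after clearing the $\lambda_1^{-4}$ denominator, one matches the resulting symmetric polynomial monomial by monomial with the given expression for $\Delta$. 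This algebraic bookkeeping is routine but lengthy and constitutes the main obstacle.

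Once the identity is in hand, the three assertions follow by sign analysis. Because $g$ has real coefficients, nonreal roots occur in complex-conjugate pairs, so only two geometric cases arise: (A) all three roots are real, or (B) one root is real and the other two form a conjugate pair $a\pm bi$ with $b\neq 0$. In case (A) each factor $\sbra{r_i-r_j}^2$ is a real square, hence nonnegative, and the product vanishes precisely when two of the roots coincide. In case (B), writing $r_1\in\bR$ and $r_2,r_3=a\pm bi$, one computes $\sbra{r_1-r_2}\sbra{r_1-r_3}=\sbra{r_1-a}^2+b^2>0$, so $\sbra{r_1-r_2}^2\sbra{r_1-r_3}^2>0$, while $\sbra{r_2-r_3}^2=\sbra{2bi}^2=-4b^2<0$.

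Combining these observations yields the trichotomy: $\Delta>0$ can occur only within case (A) and forces three distinct real roots; $\Delta=0$ occurs exactly when case (A) holds with at least one coincidence among $r_1,r_2,r_3$, i.e., three real roots with a multiple root; and $\Delta<0$ occurs precisely in case (B), giving one real root and a complex-conjugate pair. Since the lemma is a classical result and is already cited as \cite{irving2004integers}, in the final write-up I would defer the algebraic verification of the discriminant identity to that reference and retain only the short complex-conjugate sign argument sketched above.
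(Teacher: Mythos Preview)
Your proposal is correct, and in fact goes beyond what the paper does: the paper gives no proof at all for this lemma but simply cites it as a classical result from \cite{irving2004integers}. Your argument---expressing $\Delta$ as $\lambda_1^4\prod_{i<j}(r_i-r_j)^2$ via Vieta and then reading off the sign from the real-versus-conjugate-pair dichotomy---is the standard textbook proof and is exactly what one would find in the cited reference, so your final plan to defer the algebraic verification to \cite{irving2004integers} and keep only the short sign analysis matches the paper's treatment.
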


Note that $s_j>0$ for $j=1,\cdots,N$ since $\m{C}_{-j}$ and $\m{C}_{-j}^{-1}$ are positive definite,  and then $c_1,c_2>0$. We divide our discussions into three scenarios.
\subsubsection{$c_4<0$} \label{sec:basisanalysis1}
This case is the same as the update of $\m{\alpha}$ in Subsection \ref{sec:bayesianinference}. By Lemma \ref{lem:rubicroot}, $h\sbra{\alpha_j}=0$ has a unique solution $\alpha_j^*$ on $\left(0,+\infty\right)$. Then it is easy to show that $\ell\sbra{\alpha_j}$ increases monotonically on $\left(0,\alpha_j^*\right]$ and decreases monotonically on $\sbra{\alpha_j^*,+\infty}$. Thus $\ell\sbra{\alpha_j}$ obtains the maximum at $\alpha_j^*>0$.

\subsubsection{$c_4\geq0$, $c_3<0$ and $\Delta>0$} \label{sec:basisanalysis2}
Let $\Delta$ denote the determinant of $h\sbra{\alpha_j}$. We see that $h\sbra{\alpha_j}=0$ has three real distinct roots by Lemma \ref{lem:discriminant} since $\Delta>0$. The two stationary points of $h\sbra{\alpha_j}$ lie on different sides of the $y$-axis since $\frac{d\,h\sbra{\alpha_j}}{d\,\alpha_j}=3c_1\alpha_j^2+2c_2\alpha_j+c_3$ with $c_3<0$. Thus the three roots include one negative root and two distinct positive roots since $h(0)=c_4\geq0$. Denote $\alpha'_j>0$ the largest root. We see that $\ell\sbra{\alpha_j}$ decreases from $\alpha_j=0$ to some point, then increases until $\alpha_j=\alpha'_j$ and decreases again. As a result, $\ell\sbra{\alpha_j}$ obtains the maximum at $0$ or $\alpha'_j$. So we have
\begin{itemize}
 \item if $\ell\sbra{\alpha'_j}>0$, then the maximum point is $\alpha_j^*=\alpha'_j>0$;
 \item if $\ell\sbra{\alpha'_j}\leq0$, then $\alpha_j^*=0$;
\end{itemize}

\subsubsection{$c_4\geq0$, and $c_3\geq0$ or $\Delta\leq0$} \label{sec:basisanalysis3}
In this scenario, the maximum of $\ell\sbra{\alpha_j}$ is obtained at $\alpha_j^*=0$. We divide our discussions into three cases: \emph{1)} $\Delta<0$, \emph{2)} $\Delta=0$, and \emph{3)} $\Delta>0$ and $c_3\geq0$. In \emph{Case 1}, $h\sbra{\alpha_j}=0$ has only one (negative) real root. In \emph{Case 2}, the equation has three negative roots (two of them coincide), or one negative root and a multiple positive root. In \emph{Case 3} the equation has three distinct negative roots. So in all the cases $\ell'\sbra{\alpha_j}\leq0$ for $\alpha_j\geq0$, resulting in that $\ell\sbra{\alpha_j}$ decreases monotonically on $\left[0,+\infty\right)$.

Based on the analysis above, we can compute efficiently $\alpha_j^*$ (given $s_j$ and $q_j$) at which the likelihood is maximized with respect to a single basis vector $\m{a}_j$, $j=1,\cdots,N$. So, the likelihood consistently increases if we update a single $\alpha_j$ at one time with the basis $\m{a}_j$, $j=1,\cdots,N$, properly chosen. As a result, a greedy algorithm can be implemented. At the beginning, no basis vectors are included in the model (i.e., the active set is empty or all $\alpha_j=0$). Then choose a vector $\m{a}_j$ at each iteration such that it gives the largest likelihood increment by updating $\alpha_j$ from its current value $\alpha_j^0$ to the maximum point $\alpha_j^*$. If $\alpha_j^0=0$ and $\alpha_j^*>0$, then the basis vector $\m{a}_j$ is added to the model with $\alpha_j=\alpha_j^*$. If $\alpha_j^0>0$ and $\alpha_j^*>0$, then $\alpha_j$ is re-estimated in the model. If $\alpha_j^0>0$ and $\alpha_j^*=0$, then $\m{a}_j$ is removed from the model with $\alpha_j=0$. After that, update $\eta$ as in Subsection \ref{sec:bayesianinference}. The process is repeated until convergence (that is guaranteed since $\cL$ increases monotonically). Based on the results of \cite{tipping2003fast}, we see that $\m{\mu}$, $\m{\Sigma}$, $s_j$ and $q_j$, $j=1,\cdots,N$, can be efficiently updated without matrix inversions. Consequently, the greedy algorithm is computationally efficient.

\begin{rem} The main differences between the proposed algorithm and those in \cite{tipping2003fast,babacan2010bayesian} are the updates of $\m{\alpha}$ and $\eta$. Since roots of a cubic equation have explicit expressions and the update of $\eta$ hardly depends on the problem dimension, the proposed greedy algorithm has the same computational complexity as those in \cite{tipping2003fast,babacan2010bayesian} at each iteration.
\end{rem}

\subsection{Analysis of Basis Selection Condition} \label{sec:basisselectcondition}
We study the basis selection condition of the fast algorithm in more details in this subsection. Based on the analysis in Subsection \ref{sec:greedyalgorithm}, we have the following result.
\begin{prop} Suppose that the log-likelihood $\cL$ has been locally maximized in the fast algorithm. If $\alpha_j>0$ for some basis vector $\m{a}_j$, $j=1,\cdots,N$, then $q_j^2>\min\lbra{s_j+2\eta+\frac{2-2\epsilon}{\tau},\, \sbra{5-4\epsilon}s_j+2\eta+\tau\sbra{4\eta s_j+s_j^2}}$ where $s_j$, $q_j$ are as defined in Subsection \ref{sec:greedyalgorithm}. \label{prop:basisselcondition}
\end{prop}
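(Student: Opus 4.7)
The plan is to translate the basis–selection case analysis of Subsection \ref{sec:greedyalgorithm} into explicit sign conditions on the coefficients $c_3$ and $c_4$ of the cubic $h\sbra{\alpha_j}$, and then read off the bounds on $q_j^2$. Since the fast algorithm has locally maximized $\cL$ and $\alpha_j>0$, the per-basis maximizer $\alpha_j^*$ of $\ell\sbra{\alpha_j}$ must be strictly positive. Referring back to the three scenarios in Subsections \ref{sec:basisanalysis1}--\ref{sec:basisanalysis3}, the third scenario ($c_4\geq0$ together with either $c_3\geq0$ or $\Delta\leq0$) always forces $\alpha_j^*=0$. Hence the positivity of $\alpha_j^*$ rules out Scenario \ref{sec:basisanalysis3} and leaves two admissible possibilities: either Scenario \ref{sec:basisanalysis1} holds, in which case $c_4<0$, or Scenario \ref{sec:basisanalysis2} holds, in which case $c_3<0$ is a necessary condition.

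Next I would translate each of these two coefficient inequalities into an inequality on $q_j^2$ using the explicit formulas
\begin{align*}
c_3 &= \sbra{5-4\epsilon}s_j+2\eta-q_j^2+\tau\sbra{4\eta s_j+s_j^2},\\
c_4 &= 2-2\epsilon+\tau\sbra{s_j+2\eta-q_j^2}.
\end{align*}
Solving $c_4<0$ for $q_j^2$ (using $\tau>0$) gives
\[
q_j^2 > s_j+2\eta+\frac{2-2\epsilon}{\tau},
\]
while $c_3<0$ gives
\[
q_j^2 > \sbra{5-4\epsilon}s_j+2\eta+\tau\sbra{4\eta s_j+s_j^2}.
\]

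Since at least one of these two inequalities must hold whenever $\alpha_j>0$, the logical disjunction collapses to the claimed lower bound on $q_j^2$ by the larger of the two right-hand sides strictly exceeding one of them; equivalently, $q_j^2$ strictly exceeds at least one of the two, which is the statement
\[
q_j^2>\min\lbra{s_j+2\eta+\tfrac{2-2\epsilon}{\tau},\; \sbra{5-4\epsilon}s_j+2\eta+\tau\sbra{4\eta s_j+s_j^2}}.
\]

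The routine parts are the algebra that turns $c_3<0$ and $c_4<0$ into the two bounds, and the observation that $c_1,c_2>0$ (used implicitly in Scenario \ref{sec:basisanalysis3}) follows from $s_j>0$ and $\epsilon\leq1$. The only genuinely delicate step is justifying that Scenario \ref{sec:basisanalysis3} is exactly the case ``$\alpha_j^*=0$''---i.e.\ arguing cleanly that under $c_4\geq0$ the further requirement ``$c_3\geq0$ or $\Delta\leq0$'' is precisely what forces $\ell\sbra{\alpha_j}$ to be nonincreasing on $[0,+\infty)$. This has effectively been done in Subsection \ref{sec:basisanalysis3} via Lemma \ref{lem:discriminant} and the stationary-point analysis, so I would simply invoke that discussion rather than redo the sign chasing.
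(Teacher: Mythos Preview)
Your proposal is correct and follows essentially the same approach as the paper. The paper argues by contradiction: if the conclusion fails then both $c_4\geq0$ and $c_3\geq0$, which (via the Scenario \ref{sec:basisanalysis3} analysis) forces $\alpha_j^*=0$ and contradicts $\alpha_j>0$ at a local maximum; your direct case split through the three scenarios and the translation of $c_3<0$, $c_4<0$ into the two $q_j^2$ bounds is the contrapositive of the same argument, with the discriminant $\Delta$ appearing only as an artifact of partitioning into all three scenarios rather than a genuine extra ingredient.
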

\begin{proof} Note that the inequalities $q_j^2>s_j+2\eta+\frac{2-2\epsilon}{\tau}$ and $q_j^2>\sbra{5-4\epsilon}s_j+2\eta+\tau\sbra{4\eta s_j+s_j^2}$ are equivalent to $c_4<0$ and $c_3<0$ respectively. Let us suppose that the conclusion does not hold. Then we have $c_4\geq0$ and $c_3\geq0$. It follows from the analysis in Subsection \ref{sec:basisanalysis3} that the likelihood increases if $\m{a}_j$ is removed from the model, leading to contradiction.
\end{proof}

\begin{rem} Proposition \ref{prop:basisselcondition} provides a necessary condition for the basis vectors in the final active set. Note that this condition is generally insufficient since additional requirements are needed in the case of $c_4\geq0$ (e.g., $\Delta>0$) according to the analysis in Subsection \ref{sec:greedyalgorithm}. In a special case where $q_j^2>s_j+2\eta+\frac{2-2\epsilon}{\tau}$ holds, we can conclude that $\m{a}_j$ is in the model according to Subsection \ref{sec:basisanalysis1}. \label{rem:necessarynotsuff}
\end{rem}

The basis selection condition concerns the sparsity level of the solution of the fast algorithm. We have illustrated that different settings of $\tau$ and $\epsilon$ lead to different sparsity-inducing priors in Section \ref{sec:newsparseprior}. In the following we will see how the parameters affect the sparsity of the algorithm solution. We discuss the effects of $\tau$ and $\epsilon$ separately.

Let us first fix $\tau>0$ and vary $\epsilon\in\mbra{0,1}$. It is easy to show that as $\epsilon$ decreases both the terms $s_j+2\eta+\frac{2-2\epsilon}{\tau}$ and $\sbra{5-4\epsilon}s_j+2\eta+\tau\sbra{4\eta s_j+s_j^2}$ increase. Thus the necessary condition in Proposition \ref{prop:basisselcondition} becomes stronger. As a result, the solution of the greedy algorithm will be sparser, which is consistent with the fact that a smaller $\epsilon$ leads to a more sparsity-inducing prior as shown in Section \ref{sec:newsparseprior}. In the case of $\epsilon=1$, the inequality turns to be $q_j^2>s_j+2\eta$ that coincides with the result in \cite{babacan2010bayesian}. So, the proposed algorithm will produce a sparser solution than that of \cite{babacan2010bayesian} by simply setting $\epsilon<1$.

It is not obvious for the case of fixed $\epsilon<1$ and varying $\tau$ (the case $\epsilon=1$ has been discussed before) since, as $\tau$ decreases, the first term $s_j+2\eta+\frac{2-2\epsilon}{\tau}$ increases while the second term $\sbra{5-4\epsilon}s_j+2\eta+\tau\sbra{4\eta s_j+s_j^2}$ decreases. To make a correct conclusion, we observe that $\ell\sbra{\alpha_j}$ in Subsection \ref{sec:greedyalgorithm} is a strictly increasing function with respect to $\tau>0$ for any fixed $\alpha_j>0$ and keeps equal to zero at $\alpha_j=0$. As a result, as $\tau$ decreases, it is less likely that the maximum of $\ell\sbra{\alpha_j}$ is achived at a positive point, resulting in that the solution of the greedy algorithm gets sparser. This is consistent with that a smaller $\tau$ leads to a more sparsity-inducing prior as shown in Section \ref{sec:newsparseprior}. In fact, the greedy algorithm produces a zero solution if $\tau=0$ and $\epsilon<1$ since in such a case the maximum of the likelihood is always achieved at the origin with respect to every basis vector. So intuitively, a smaller $\tau$ should be used to obtain a more sparsity-inducing prior but too small $\tau$ may lead to inaccuracy for the greedy algorithm. Numerical simulations in Section \ref{sec:simulation} will illustrate that the recommended $\tau$ in Subsection \ref{sec:tau_setting} is a good choice.

\begin{rem} In the case of $\tau=0$, the proposed G-STG prior coincides with the Gaussian-gamma prior in \cite{pedersen2011sparse}. The greedy algorithm using this prior developed in \cite{pedersen2011sparse} is claimed to follow from the same framework in \cite{tipping2003fast} and maximize the likelihood sequentially. However, it should be noted that the algorithm in \cite{pedersen2011sparse} does not really maximize the likelihood since, if it does, then it should produce a zero solution as discussed above. Specifically, the authors of \cite{pedersen2011sparse} compute only a local maximum point of $\ell\sbra{\alpha_j}$ which cannot guarantee to increase the likelihood $\cL$ while the global maximum of $\ell\sbra{\alpha_j}$ is always obtained at the origin. Hence the algorithm in \cite{pedersen2011sparse} is technically incorrect because of the inappropriate basis update scheme. Moreover, the algorithm in \cite{pedersen2011sparse} has not been shown to provide guaranteed convergence.
\end{rem}

\section{Numerical Simulations} \label{sec:simulation}
In this section, we present numerical results to illustrate the performance of the proposed method (we consider only the fast algorithm in Subsection \ref{sec:greedyalgorithm}). We consider both one-dimensional synthetic signals and two-dimensional images, and compare with existing methods, including $\ell_1$ optimization (BP or BPDN), reweighted (RW-) $\ell_1$ optimization \cite{candes2008enhancing}, StOMP\cite{donoho2006sparse}, the basic BCS \cite{tipping2003fast} and BCS with the Laplace prior (denoted by Laplace)\cite{babacan2010bayesian}. BCS, Laplace and the proposed method are SBL methods. $\ell_1$ optimization is a convex optimization method. RW-$\ell_1$ is related to nonconvex optimization. StOMP is a greedy method. The Matlab codes of BP, StOMP and BCS are obtained from the SparseLab package\footnote{Available at http://sparselab.stanford.edu.}, and that of BPDN is from the $\ell_1$-magic package\footnote{Available at http://users.ece.gatech.edu/$\sim$justin/l1magic.}. The code of Laplace is available at https://netfiles.uiuc.edu/dbabacan/www/links.html. The number of iterations is set to 5 for RW-$\ell_1$ (i.e., 5 $\ell_1$ minimization problems are solved iteratively). To make a fair comparison, we use the same convergence criterion in the proposed method as in BCS and Laplace with the stopping tolerance set to $1\times10^{-8}$.

The performance metrics adopted include the relative mean squared error (RMSE, calculated by $\twon{\widehat{\m{x}}-\m{x}}^2/\twon{\m{x}}^2$), the support size ($\zeron{\widehat{\m{x}}}$), the number of iterations (for the three BCS methods) and the CPU time, where $\widehat{\m{x}}$ and $\m{x}$ denote the recovered and original signals, respectively.

\subsection{One-Dimensional Synthetic Signals}

\subsubsection{Performance with respect to $\tau$} An explicit determination of $\tau$ has been recommended in Subsection \ref{sec:tau_setting}. In the following, we show that, indeed, this setting leads to good performance in the signal recovery. In our simulation, we set the signal length $N=512$ and the number of nonzero entries $K=20$, and vary the sample size $M$ from 40 to 140 with step size of 5. The nonzero entries of the sparse signal are randomly located with the amplitudes following from a zero-mean unit-variance Gaussian distribution. We consider two matrix ensembles for the sensing matrix $\m{A}$, including Gaussian ensemble and uniform spherical ensemble (with columns uniformly distributed on the sphere $\bS^{N-1}$). To obtain the desired SNR, zero-mean AWGNs are added to the linear measurements where the noise variance is set to $\sigma^2=\sbra{K/M}10^{-\text{SNR}/10}$. We set $\text{SNR}=25\,\text{dB}$. The noisy measurements are used in the following signal recovery process. In the proposed algorithm, we set $\epsilon=0.01$ which results in both fast and accurate recovery (this will be illustrated in the next experiment). Denote $\tau_0=\sbra{M/N}\sigma^2$. We set $\tau=\theta\tau_0$ and consider six values of $\theta=10^{-4},\;10^{-2},\;10^{-1},\;1,\;10$ and $100$. Thus $\theta=1$ leads to the recommended value of $\tau$ in Subsection \ref{sec:tau_setting}. For each $M$, 100 random problems are generated and solved respectively using the proposed method with different $\tau$. The metrics are averaged results over the 100 trials.

Our simulation results are presented in Fig. \ref{Fig:choice_tau}. Fig. \ref{Fig:choice_tau_RMSE} and Fig. \ref{Fig:choice_tau_SuppSize} plot the RMSEs and support sizes of the proposed algorithm with Gaussian sensing matrices. It is shown that the recommended $\tau$ leads to approximately the smallest error with a reasonable number of measurements while the errors are almost the same when the sample size is small for different $\tau$'s. Fig. \ref{Fig:choice_tau_SuppSize} shows that the recommended $\tau$ results in the most accurate estimation of the support size in most cases. In addition, it is shown that a sparser solution is obtained if a smaller $\tau$ is used in the algorithm as expected. Almost identical performance is shown in Fig. \ref{Fig:choice_tau_USEMat_RMSE} and \ref{Fig:choice_tau_USEMat_SuppSize} by using the uniform spherical ensemble. Thus, we consider only the uniform spherical ensemble in the following experiments.

\begin{figure}
\centering
  \subfigure[]{
    \label{Fig:choice_tau_RMSE}
    \includegraphics[width=1.7in]{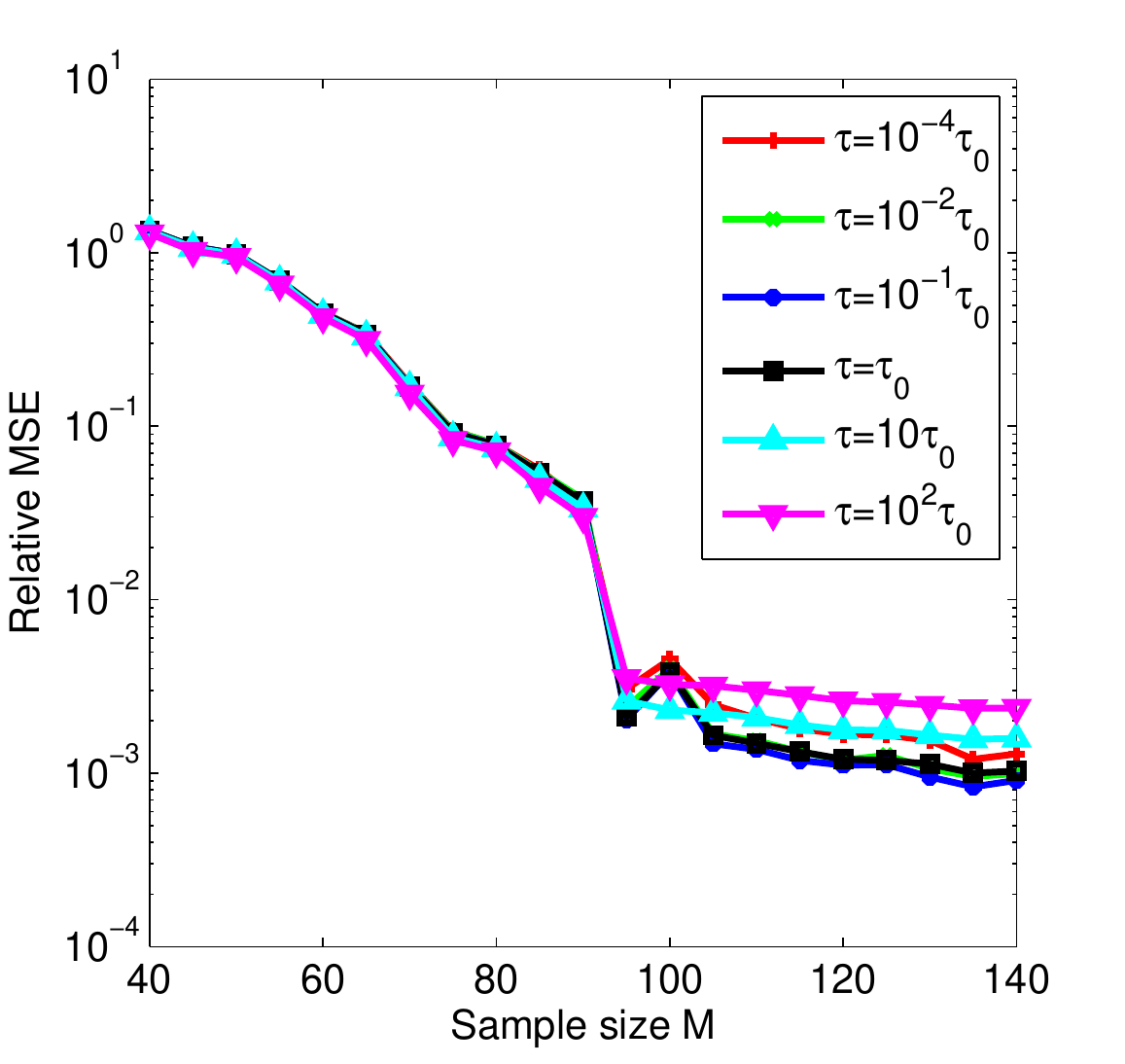}}%
  \subfigure[]{
    \label{Fig:choice_tau_SuppSize}
    \includegraphics[width=1.7in]{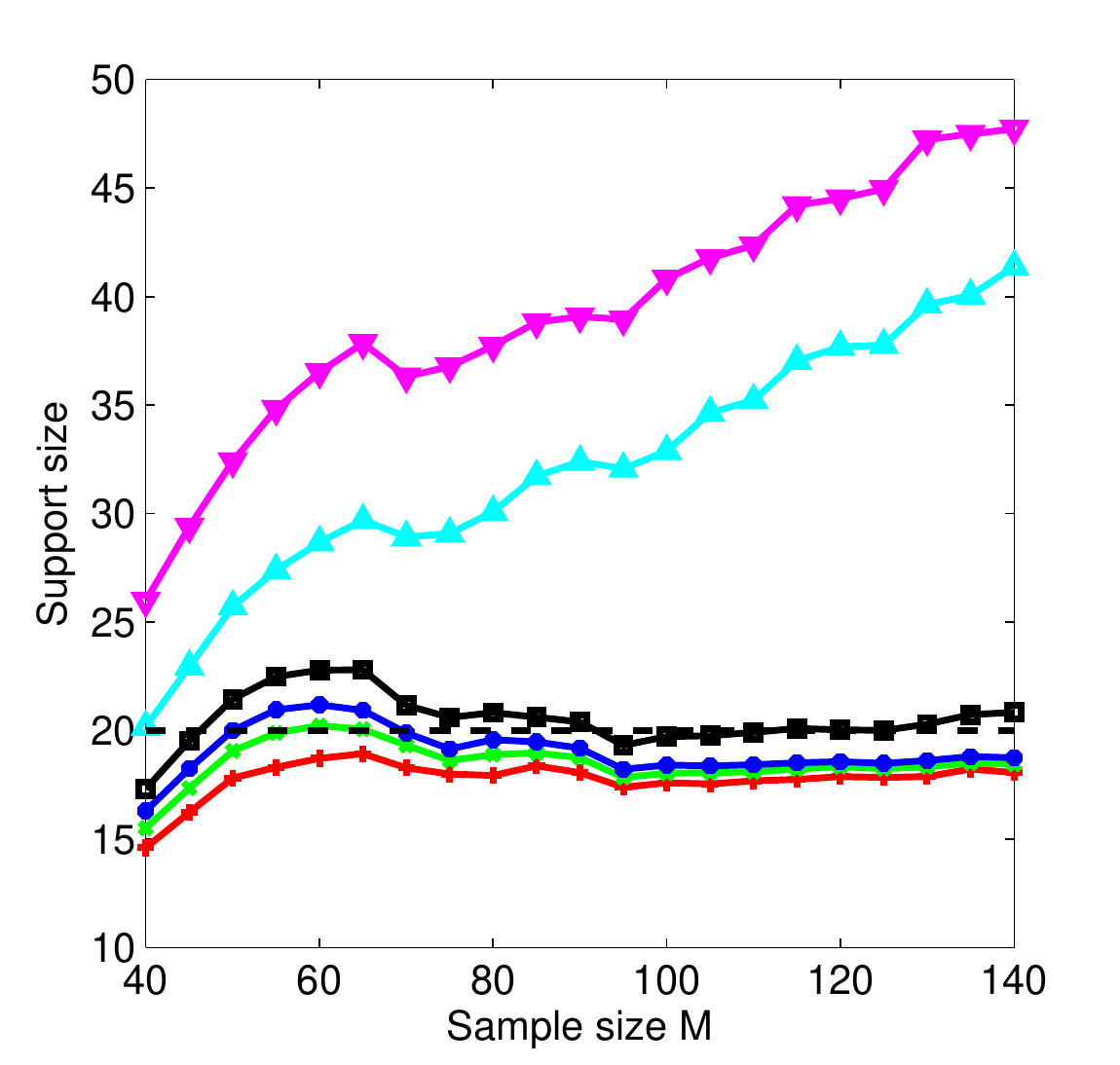}}
  \subfigure[]{
    \label{Fig:choice_tau_USEMat_RMSE}
    \includegraphics[width=1.7in]{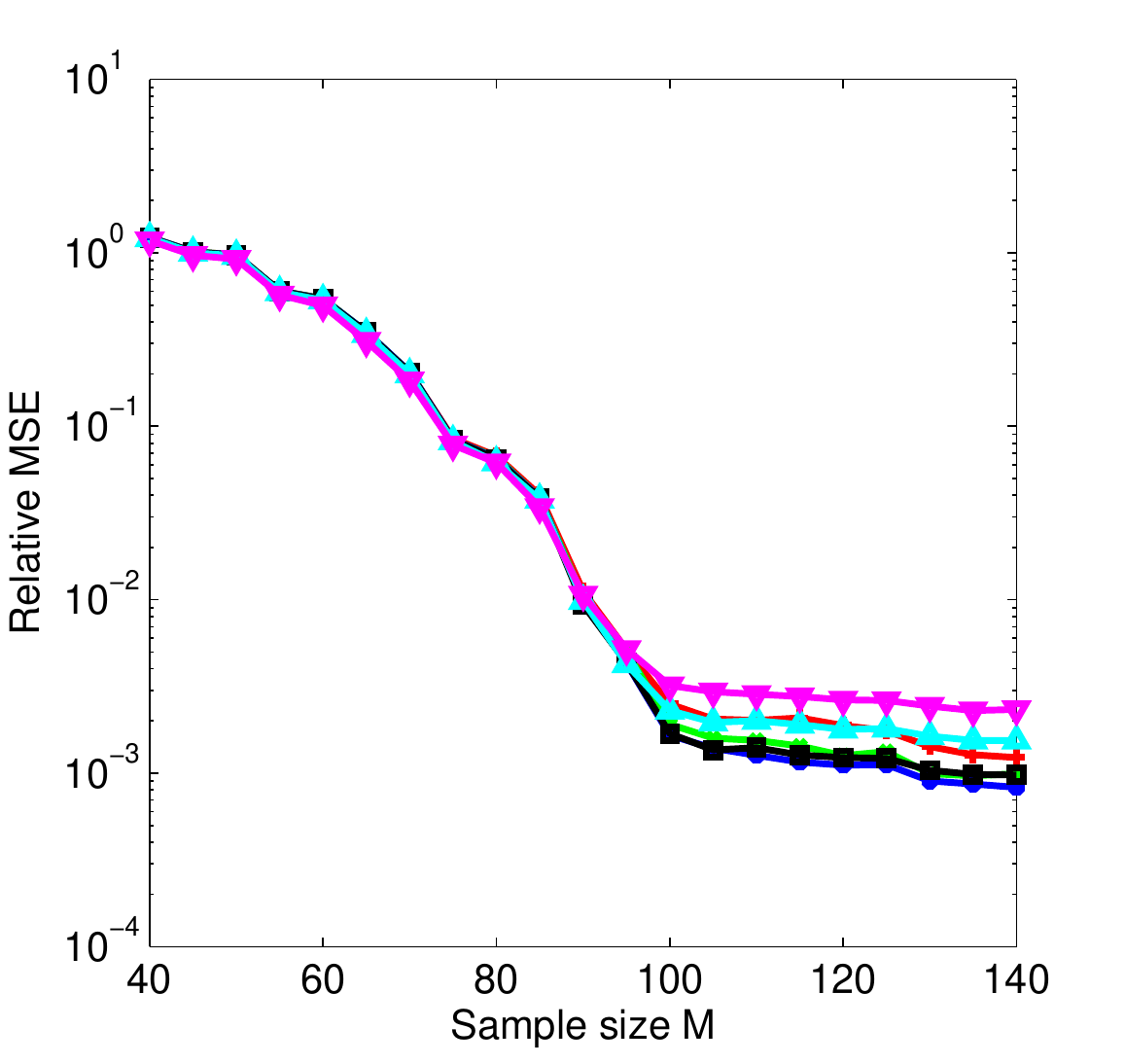}}%
  \subfigure[]{
    \label{Fig:choice_tau_USEMat_SuppSize}
    \includegraphics[width=1.7in]{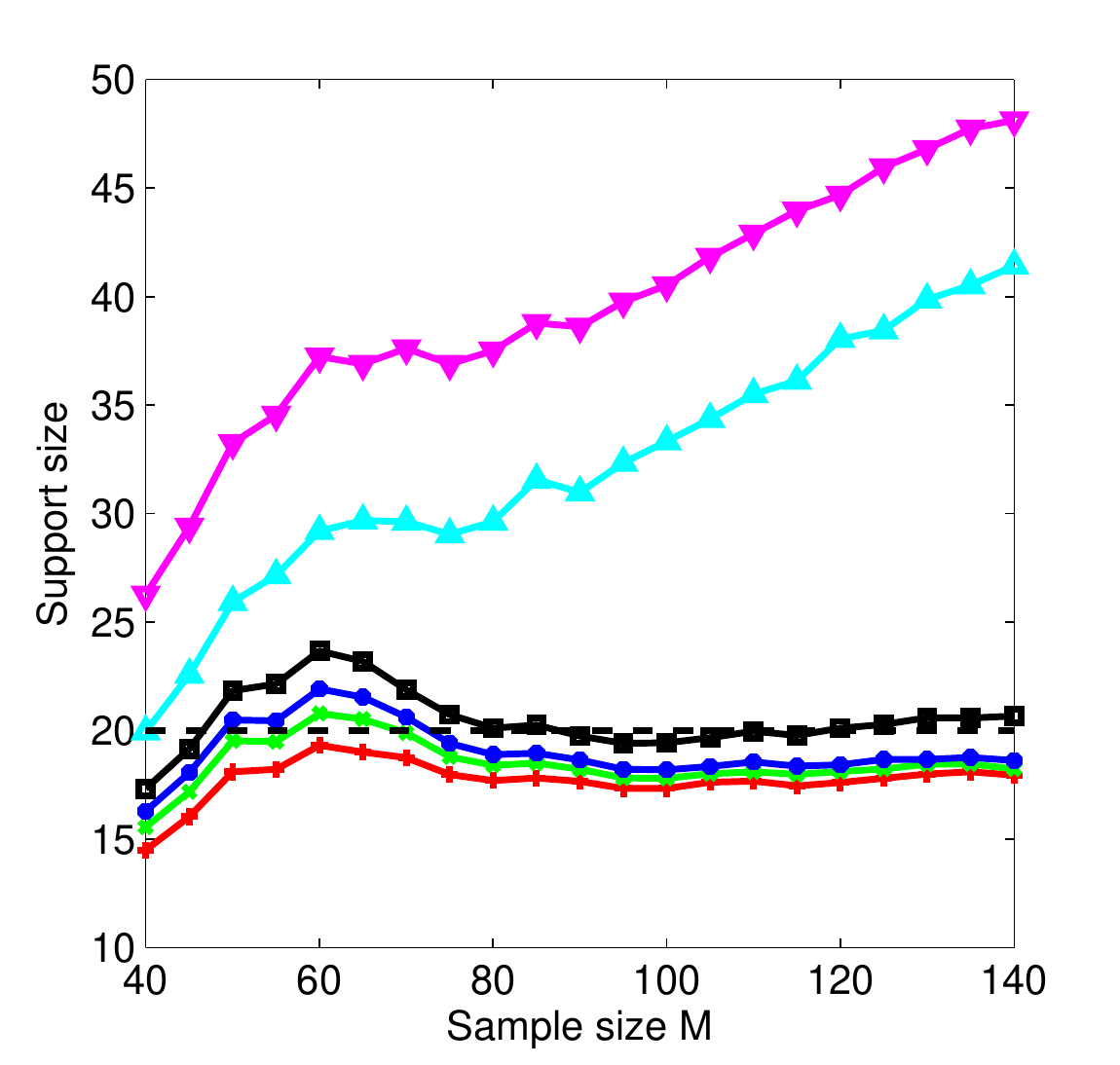}}
\centering
\caption{Performance of the proposed algorithm with respect to different settings of $\tau$ with (a) \& (b) Gaussian ensemble, and (c) \& (d) uniform spherical ensemble.} \label{Fig:choice_tau}
\end{figure}

\subsubsection{Performance with respect to $\epsilon$} We study now the performance of the proposed algorithm with respect to $\epsilon$. We repeat the simulation above using the recommended $\tau$ and consider five values of $\epsilon=0,\;0.01,\;0.1,\;0.5$ and $1$. Note that the case $\epsilon=1$ corresponds to the Laplace prior. Our simulation results are presented in Fig. \ref{Fig:performance_wrt_epsilon}. It is shown in Fig. \ref{Fig:performance_wrt_epsilon_RMSE} that the signal recovery error decays as the sample size increases in general. As the sample size is small the estimation errors differ slightly. But with a reasonable number of measurements a smaller $\epsilon$ results in a smaller error. It is shown in Fig. \ref{Fig:performance_wrt_epsilon_SuppSize} that a smaller $\epsilon$ leads to a sparser solution as expected and more accurate support size estimation. Another advantage of adopting a small $\epsilon$ can be observed in Fig. \ref{Fig:performance_wrt_epsilon_Niter} where it is shown that a smaller $\epsilon$ leads to less number of iterations. In general, the time consumption is proportional to the number of iterations since the computational workload is approximately the same at each iteration. Fig. \ref{Fig:performance_wrt_epsilon_Time} shows an exception at $\epsilon=0$ as illustrated in Remark \ref{rem:timeatepsilonis0}. In this case, the update of $\eta$ takes most of the computational time in our simulation. Since it is shown in Figs. \ref{Fig:performance_wrt_epsilon_RMSE} -- \ref{Fig:performance_wrt_epsilon_Niter} that the performance at $\epsilon=0$ and $0.01$ is hardly distinguishable, we use $\epsilon=0.01$ in the rest simulations.

\begin{figure}
\centering
  \subfigure[]{
    \label{Fig:performance_wrt_epsilon_RMSE}
    \includegraphics[width=1.7in]{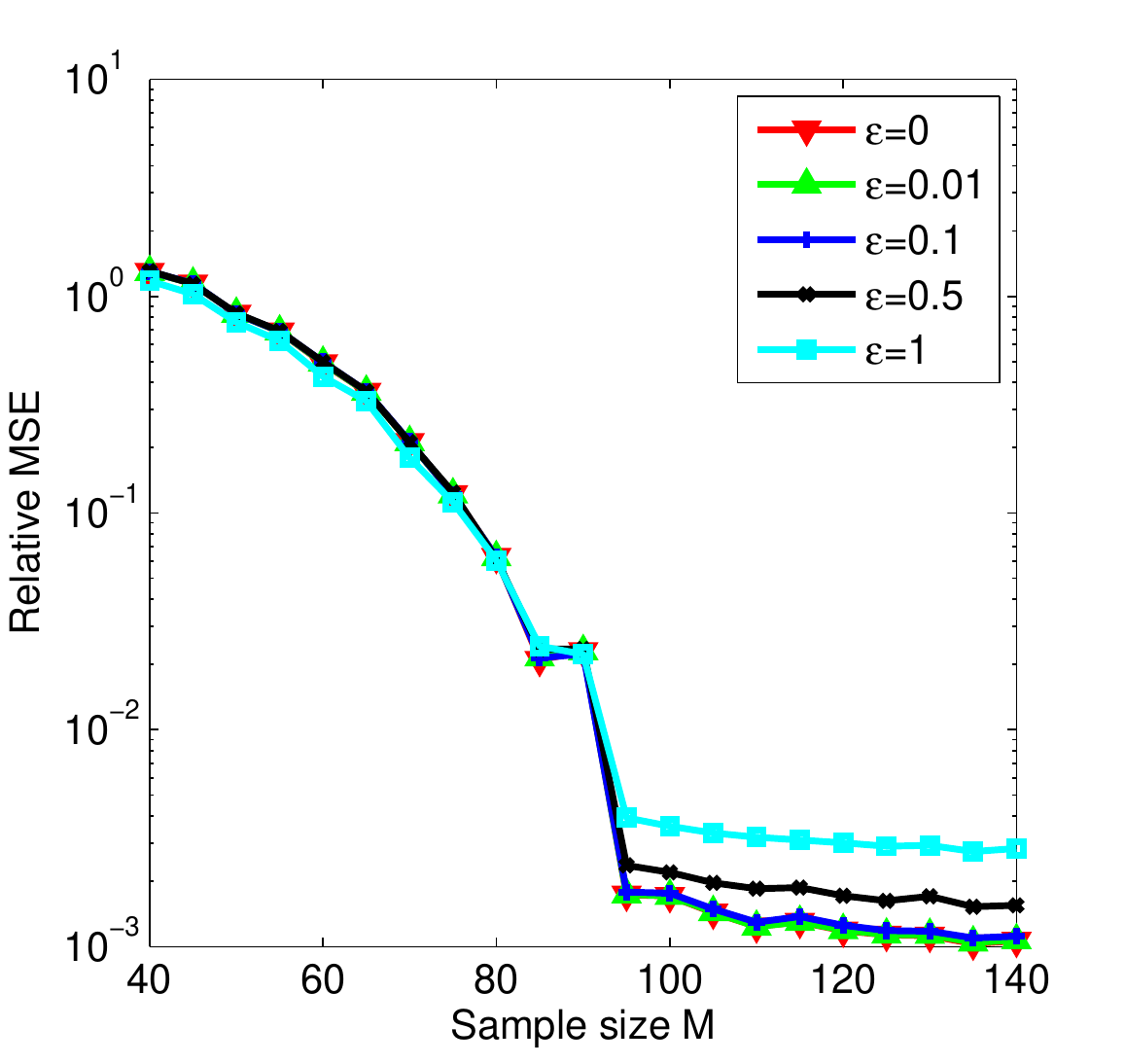}}%
  \subfigure[]{
    \label{Fig:performance_wrt_epsilon_SuppSize}
    \includegraphics[width=1.7in]{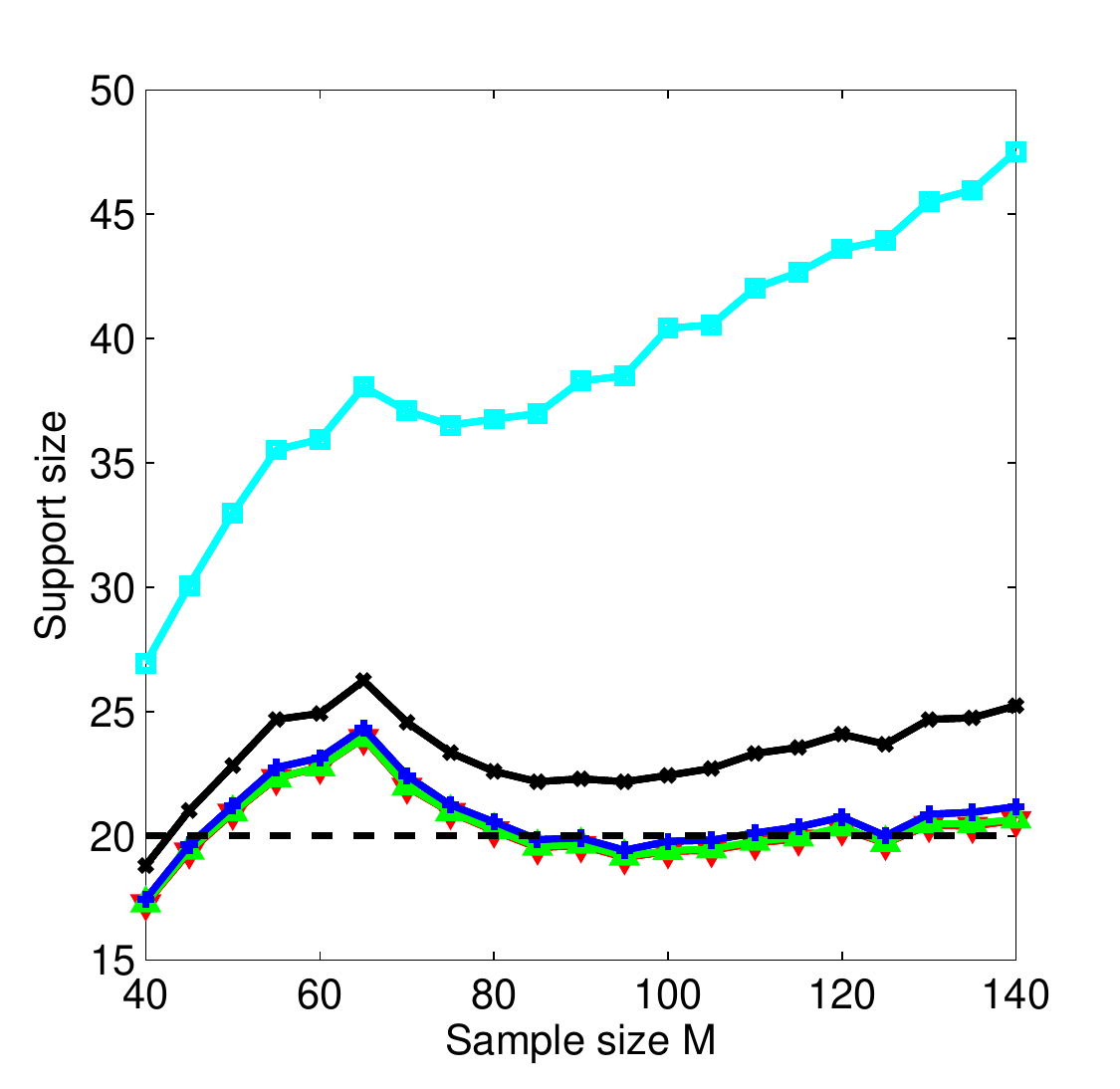}}
  \subfigure[]{
    \label{Fig:performance_wrt_epsilon_Niter}
    \includegraphics[width=1.7in]{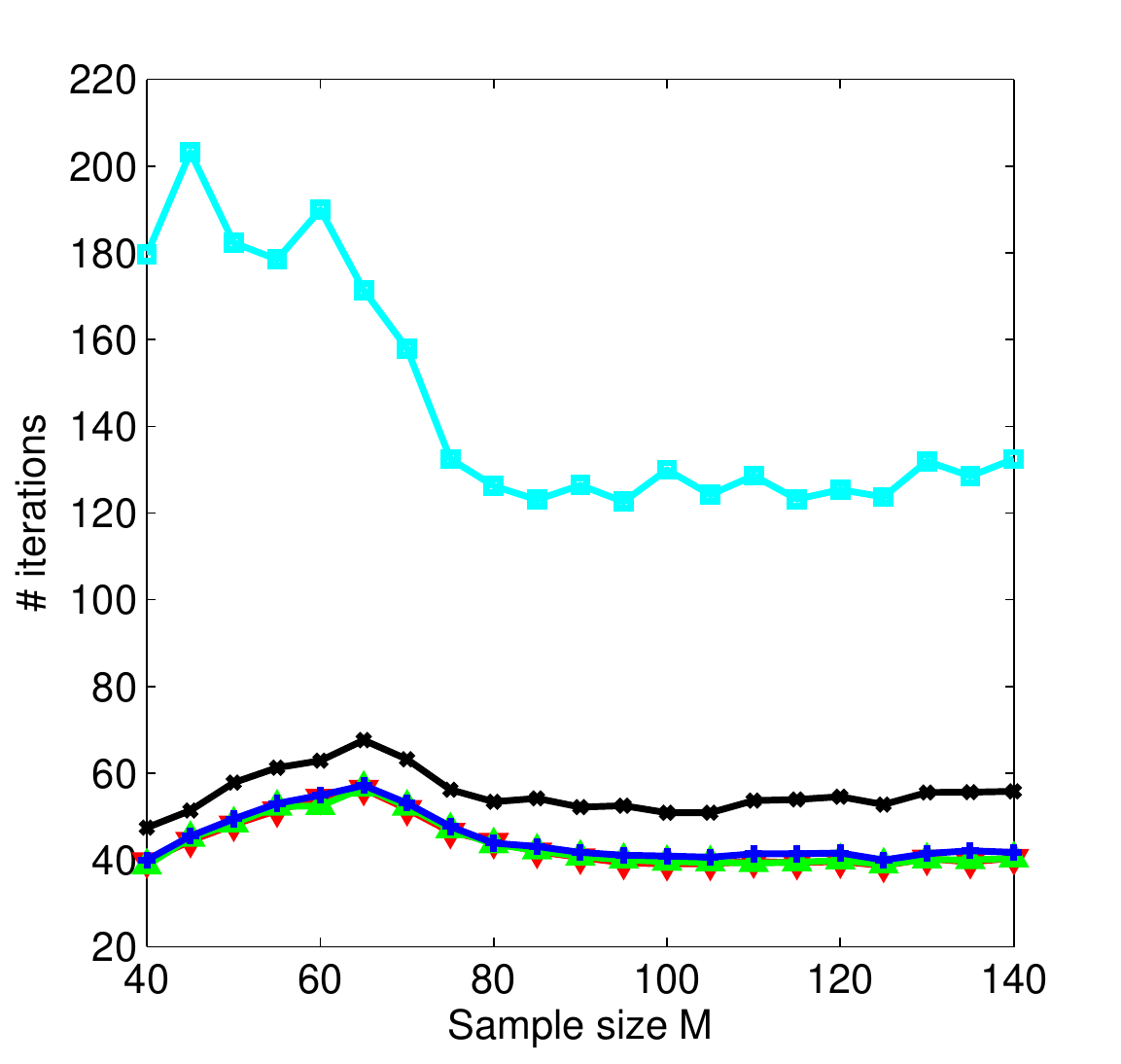}}%
  \subfigure[]{
    \label{Fig:performance_wrt_epsilon_Time}
    \includegraphics[width=1.7in]{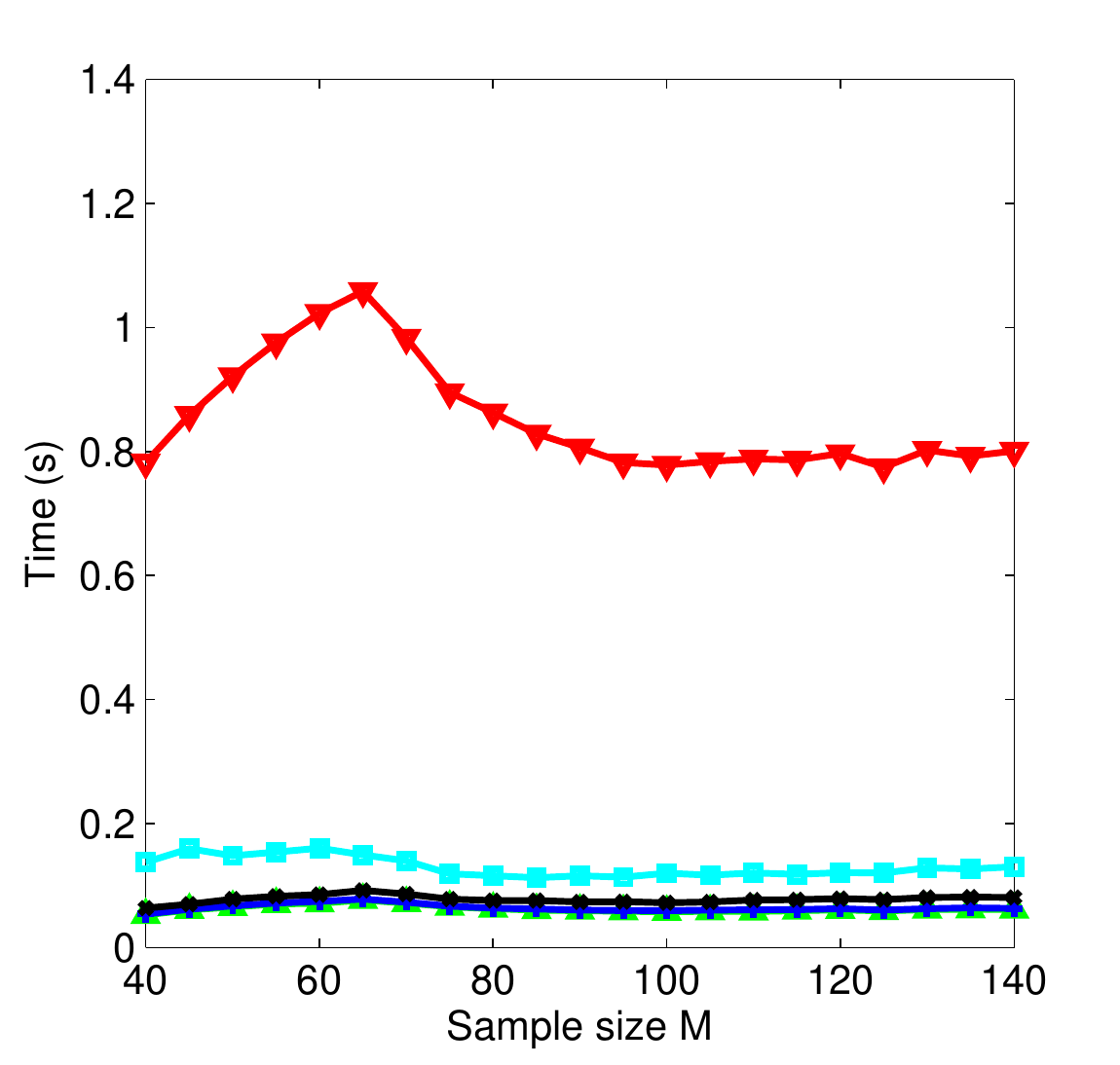}}
\centering
\caption{Performance of the proposed algorithm with respect to different settings of $\epsilon$.} \label{Fig:performance_wrt_epsilon}
\end{figure}

\subsubsection{Comparison with existing methods} We consider two simulation setups. In the first case we repeat the simulations above (i.e., we fix the $\text{SNR}=25\,\text{dB}$ and vary $M$). In each trial, all methods share the same data. We adopt the FAR thresholding strategy in StOMP. Our simulation results are presented in Fig. \ref{Fig:comparison_case1}. It is shown in Fig. \ref{Fig:comparison_case1_RMSE} that the reconstruction errors of the three SBL methods (BCS, Laplace and our proposed method) are very close to each other and larger than those of BPDN, RW-BPDN and StOMP if the sample size is small. With a reasonable sample size it can be seen that our proposed method has the smallest error. Fig. \ref{Fig:comparison_case1_SuppSize} shows the average support size of the recovered signal. The results of BPDN and RW-BPDN are omitted since they are global optimization methods and their numerical solutions have no exact zero entries. In general, the estimated support sizes of StOMP, BCS and Laplace increase with the sample size. As expected the proposed method produces sparser solutions than BCS and Laplace. It is shown that the proposed method can accurately estimate the support of the sparse signal in most cases and has the best performance. Fig. \ref{Fig:comparison_case1_Niter} plots the number of iterations of the three SBL methods, where it is shown that the proposed one uses the least number of iterations and thus is the fastest one in computational speed. On average, StOMP uses the least computation time (about $0.01$s), followed by the SBL methods (from $0.06$ to $0.1$s), and then BPDN (about $1$s) and RW-BPDN (about $2$s). We note that a number of solvers have been proposed to solve the BPDN problem with improved speed, e.g., SPGL1\cite{van2008probing}.

\begin{figure*}
\centering
  \subfigure[]{
    \label{Fig:comparison_case1_RMSE}
    \includegraphics[width=2.33in]{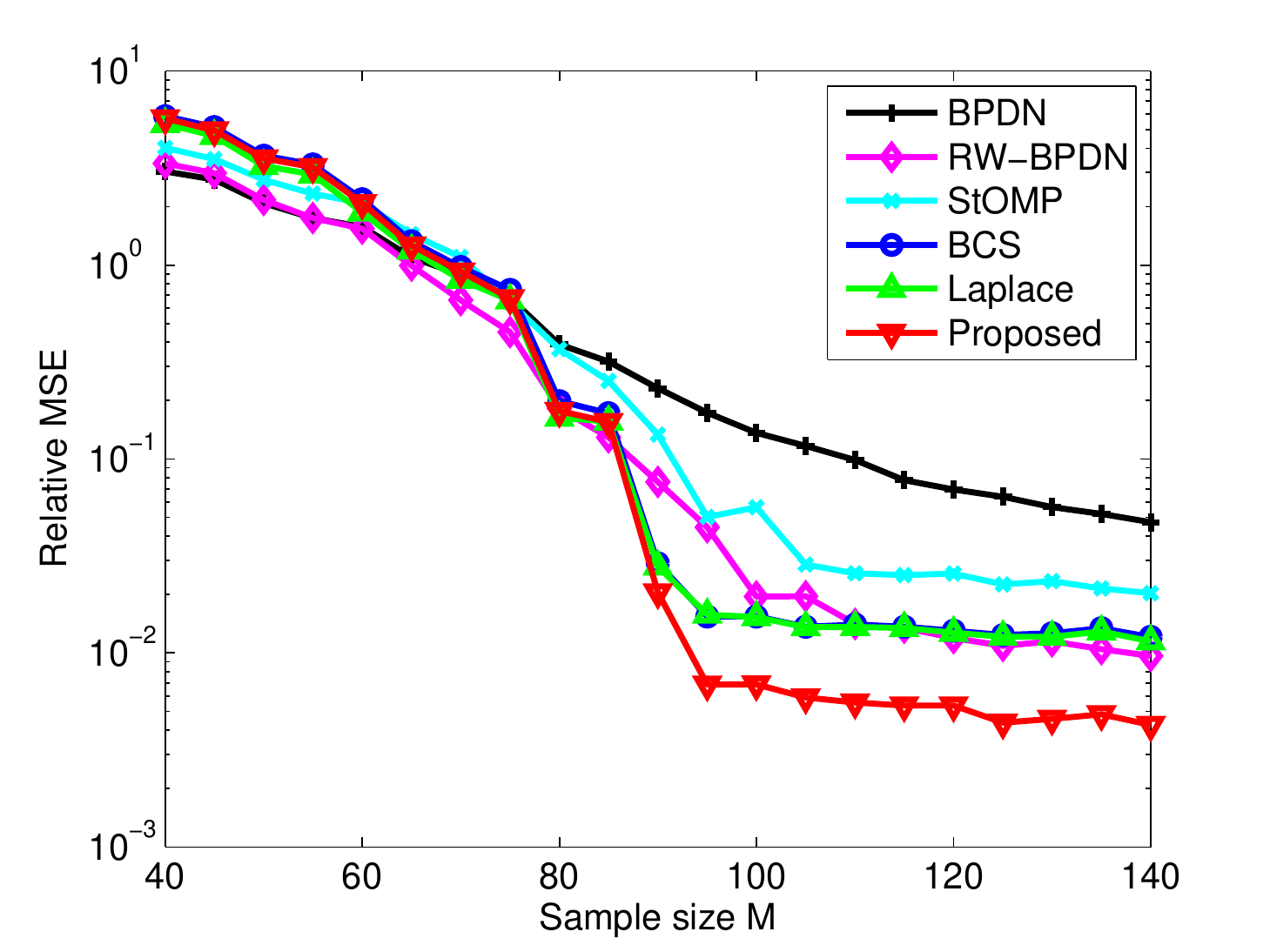}}%
  \subfigure[]{
    \label{Fig:comparison_case1_SuppSize}
    \includegraphics[width=2.33in]{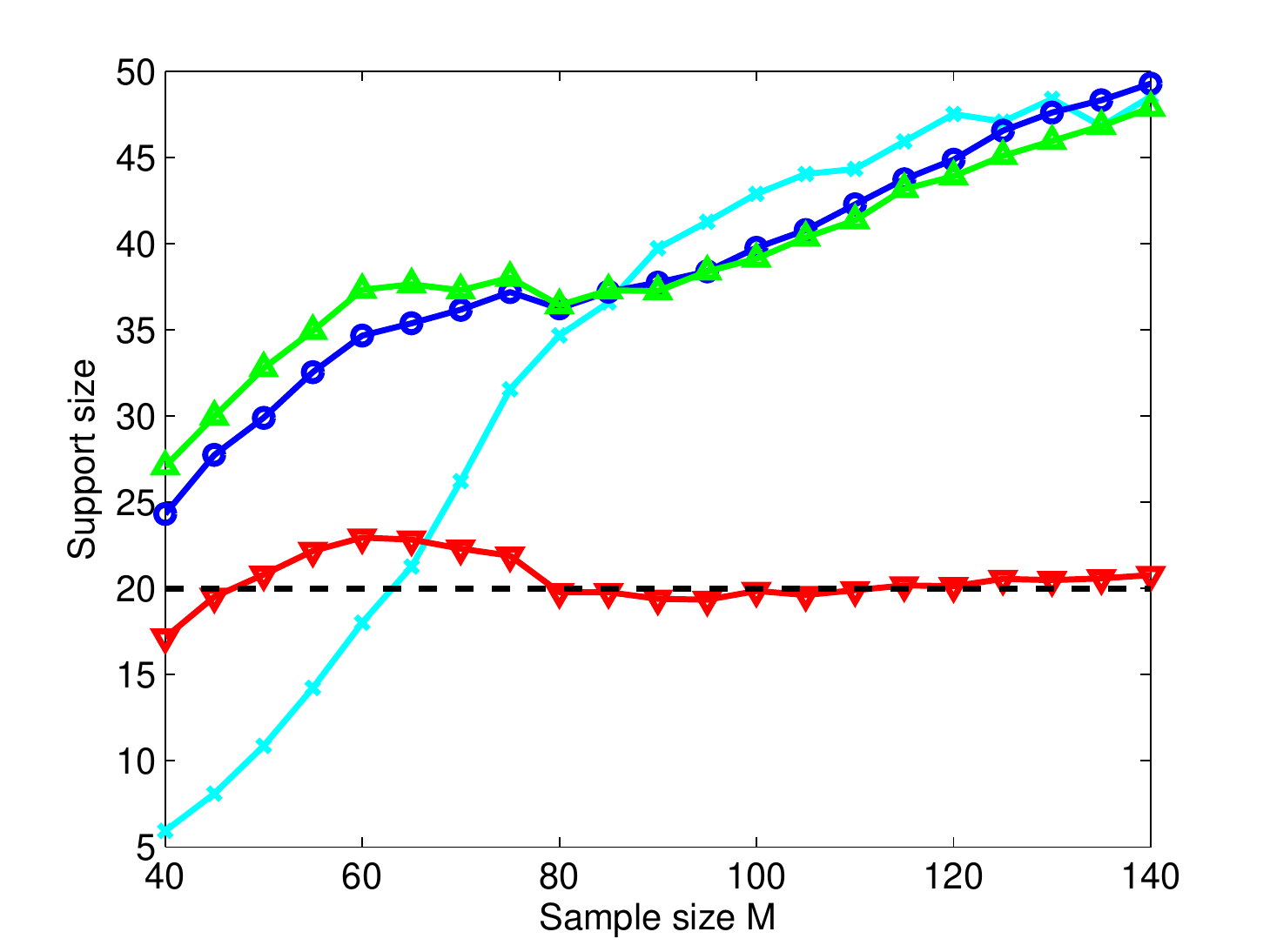}}%
  \subfigure[]{
    \label{Fig:comparison_case1_Niter}
    \includegraphics[width=2.33in]{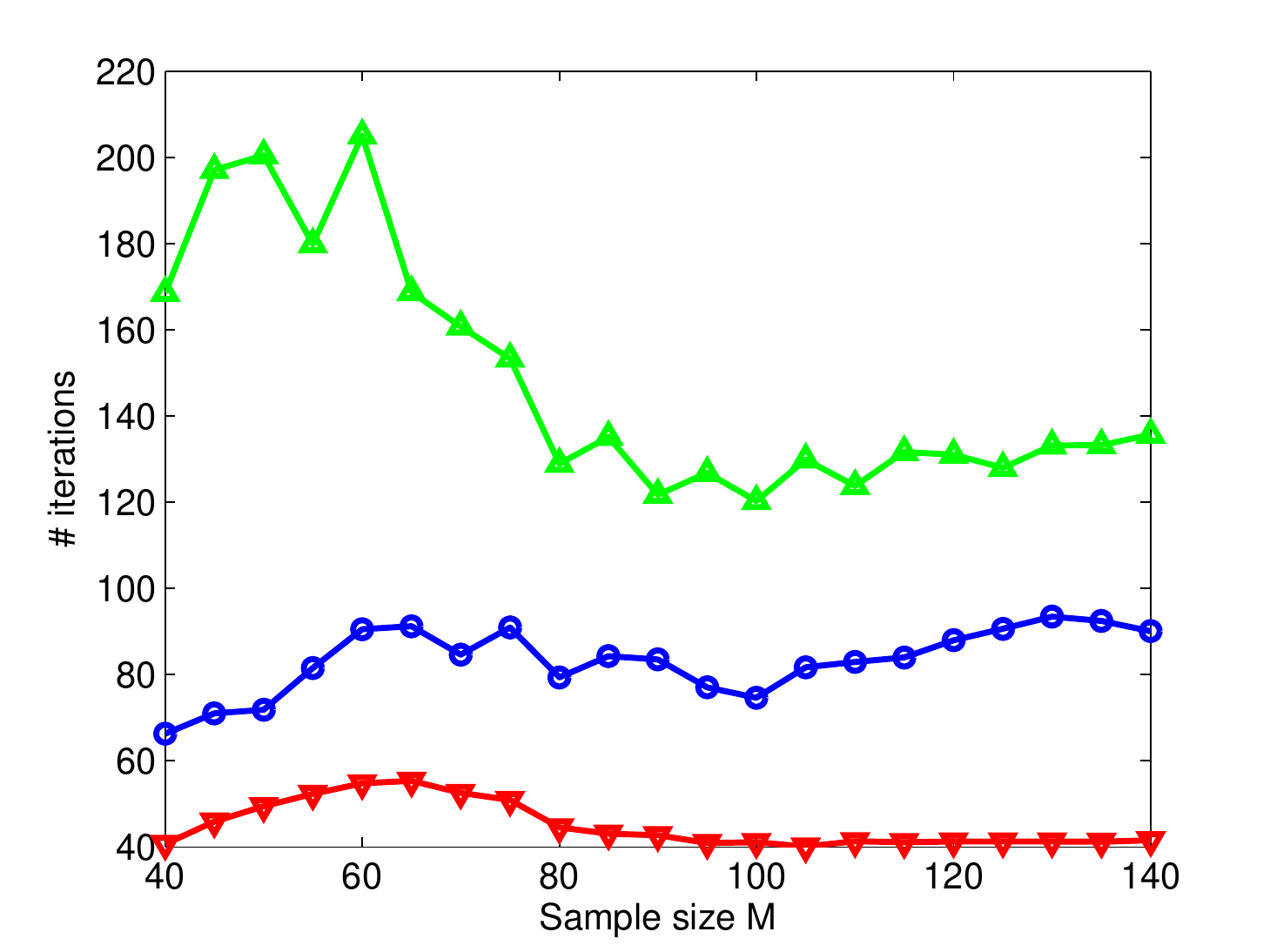}}
\centering
\caption{Performance comparison of the proposed algorithm with existing ones with $\sbra{N, K, \text{SNR}}=\sbra{512, 20, 25 \text{dB}}$.} \label{Fig:comparison_case1}
\end{figure*}

In the next simulation we set the sample size $M=120$ and vary the SNR from 0 to 50dB with step size of 5dB. The simulation results are presented in Fig. \ref{Fig:comparison_case2}. Fig. \ref{Fig:comparison_case2_RMSE} shows that the proposed method has consistently the smallest signal recovery error. Fig. \ref{Fig:comparison_case2_SuppSize} shows that the proposed method produces the sparsest solution and the most accurate support size estimation. Fig. \ref{Fig:comparison_case2_Niter} shows that among the three SBL methods the proposed one uses the least number of iterations at all SNR levels. In the low SNR regime, it can be 6 and 3 times less in comparison with Laplace and BCS respectively, leading to that the proposed method is much faster than Laplace and BCS.

\begin{figure*}
\centering
  \subfigure[]{
    \label{Fig:comparison_case2_RMSE}
    \includegraphics[width=2.33in]{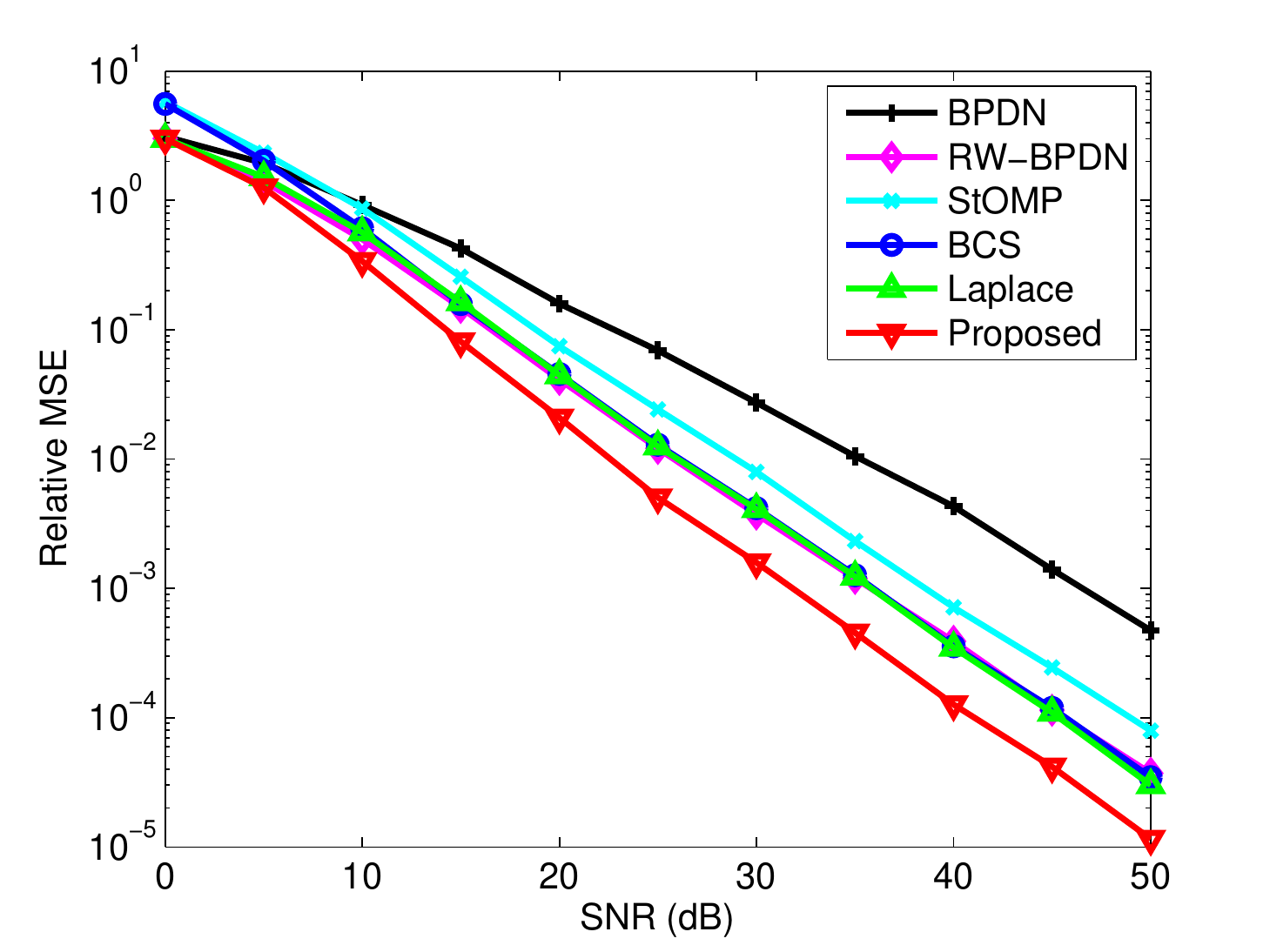}}%
  \subfigure[]{
    \label{Fig:comparison_case2_SuppSize}
    \includegraphics[width=2.33in]{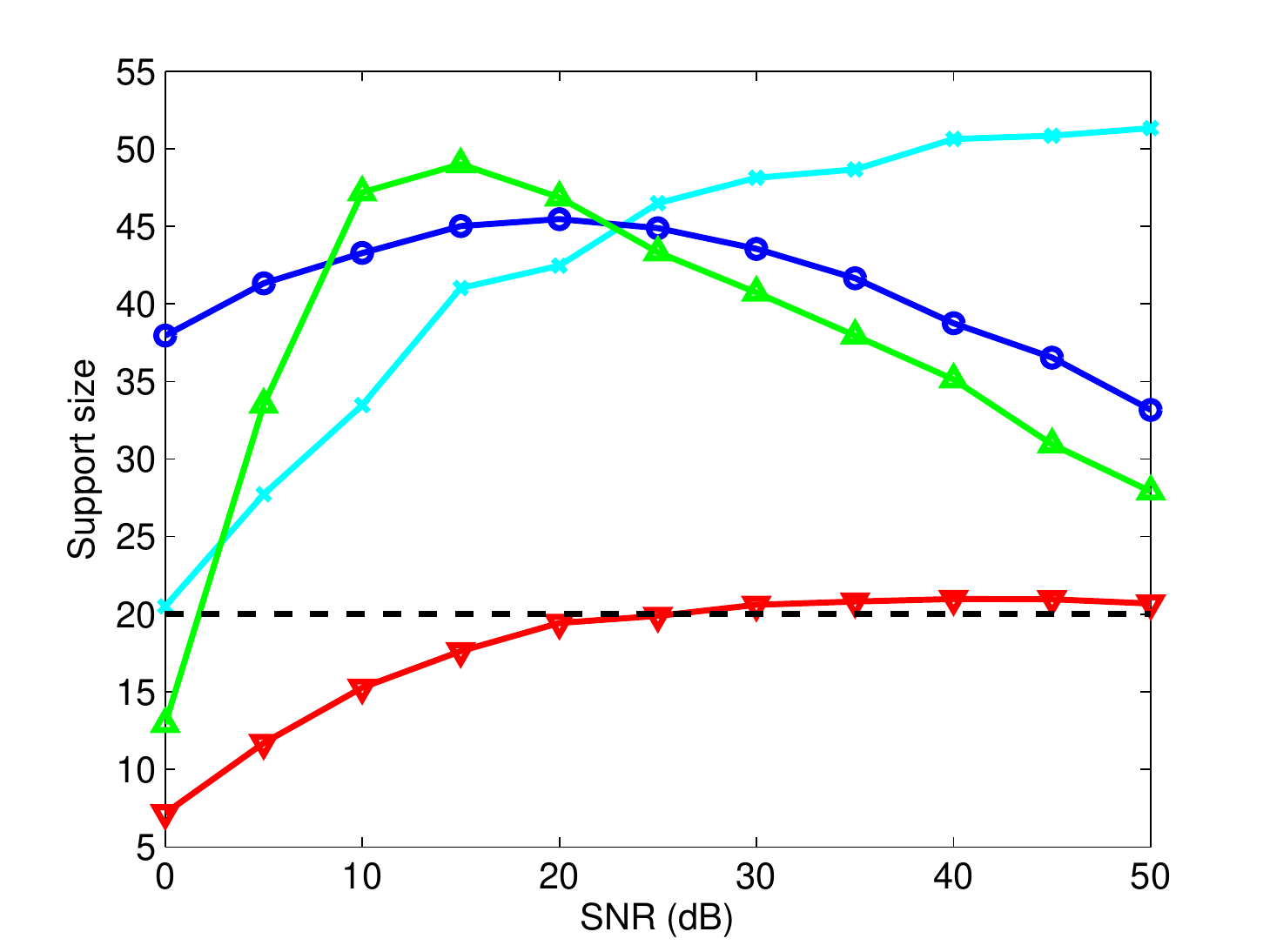}}%
  \subfigure[]{
    \label{Fig:comparison_case2_Niter}
    \includegraphics[width=2.33in]{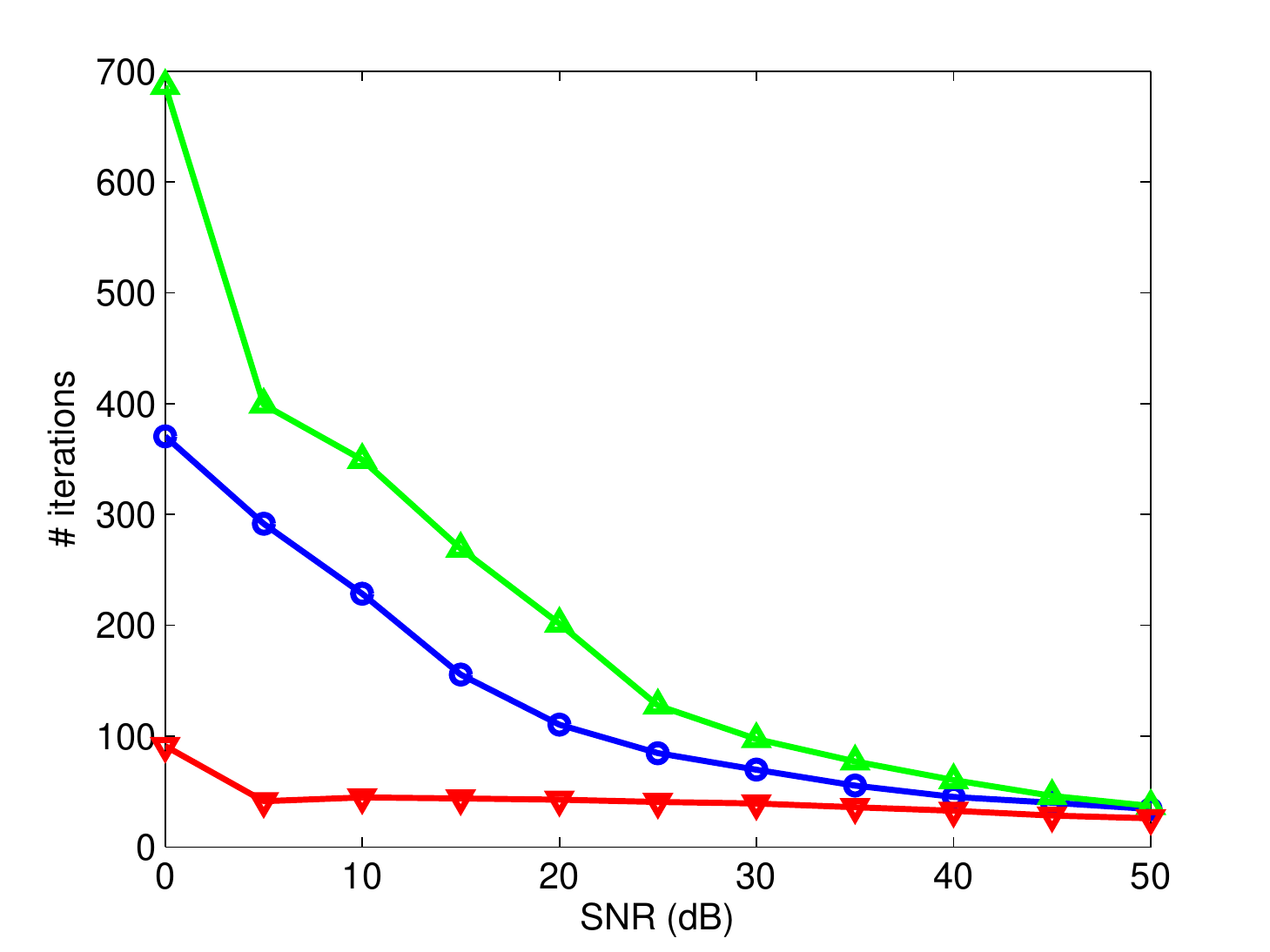}}
\centering
\caption{Performance comparison of the proposed algorithm with existing ones with $\sbra{N, M, K}=\sbra{512, 120, 20}$.} \label{Fig:comparison_case2}
\end{figure*}

In summary, the proposed method has improved performance for sparse signals in comparison with existing ones. It outperforms its SBL peers in both signal recovery accuracy and computational speed.

\subsection{Images}
In this section, we revisit the widely used multiscale CS reconstruction \cite{tsaig2006extensions} of the $512\times512$ Mondrian image in SparseLab. We use the same simulation setup, i.e., we choose the ``symmlet8'' wavelet as the sparsifying basis with a coarsest scale $j_0=4$ and a finest scale $j_1=6$. The number of wavelet samples is $N=4096$ and the sample size of CS methods is $M=2713$. The parameters of BP and StOMP with the FDR and FAR thresholding strategies (denoted by FDR and FAR respectively) are set as in SparseLab. Since the wavelet expansion of the Mondrian image is compressible but not exactly sparse, we set $\sigma^2=0.01Var\sbra{\m{y}}$ in Laplace and our proposed method as in BCS, where $Var\sbra{\m{y}}$ denotes the variance of the entries of $\m{y}$.

Table \ref{table:Time} presents the experimental results over 100 trials. Linear reconstruction from $4096$ wavelet samples has a reconstruction error of $0.1333$ that represents a lower bound of the error of the considered CS methods. The global optimization method BP has the smallest error among the CS methods, followed by BCS, Laplace, the proposed method and StOMP. The presented results verify again that the proposed method produces a sparser solution than BCS and Laplace. In fact, it produces the sparsest solution among all the methods. So it is reasonable that the proposed method has a slightly worse reconstruction error in comparison with BCS and Laplace since the original signal is not exactly sparse. We note that the proposed method is faster than BCS and Laplace. FDR uses the least time but has the worst accuracy. In comparison with FAR, the proposed method is slightly slower but more accurate. Finally, it can be observed that the proposed method has the most stable performance among the CS methods except BP by comparing the standard deviation of the metrics. We note that BP can be accelerated using recently developed algorithms for $\ell_1$ optimization. Fig. \ref{Fig:image} shows examples of reconstructed images where faithful reconstructions of the Mondrian image can be observed.

\begin{table}
 \caption{Averaged Relative MSEs, CPU Times and Number of Nonzero Entries ($\text{mean}\pm\text{standard deviation}$) for Multiscale CS Reconstruction of the Mondrian Image. }
 \centering
\begin{tabular}{l|l|l|l}
  \hline\hline
   & RMSE & Time (s) & \# Nonzeros \\\hline
   Linear& $0.1333$ & --- & $4096$ \\
   BP&$0.1393\pm0.0008$ & $42.2\pm4.03$ & $4096\pm0$ \\
   FDR& $0.1999\pm0.0487$ & $8.84\pm2.12$ & $2155\pm122$ \\
   FAR& $0.1499\pm0.0033$ & $17.0\pm4.35$ & $1142\pm41$ \\
   BCS& $0.1423\pm0.0023$ & $27.2\pm5.92$ & $1305\pm67$ \\
   Laplace& $0.1429\pm0.0013$ & $25.7\pm6.11$ & $1218\pm65$ \\
   Proposed & $0.1448\pm0.0011$ & $21.3\pm4.09$ & $1049\pm21$ \\
  \hline\hline
\end{tabular}\label{table:Time}
\end{table}

\begin{figure*}
\centering
  \subfigure[Mondrian]{
    \label{Fig:Mondrian}
    \includegraphics[width=1.7in]{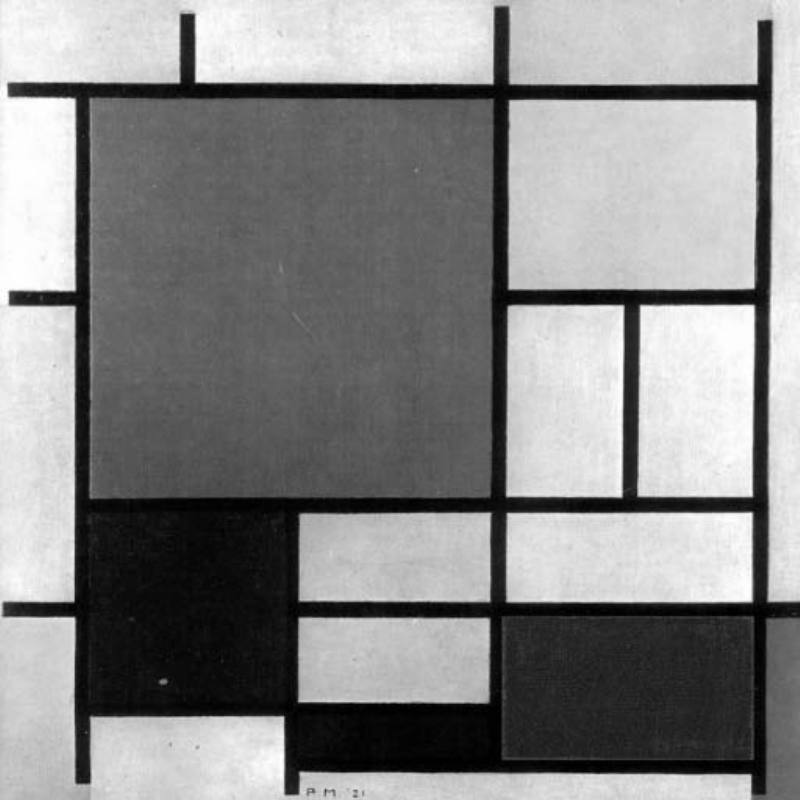}}%
  \subfigure[Linear]{
    \label{Fig:recon_LIN}
    \includegraphics[width=1.7in]{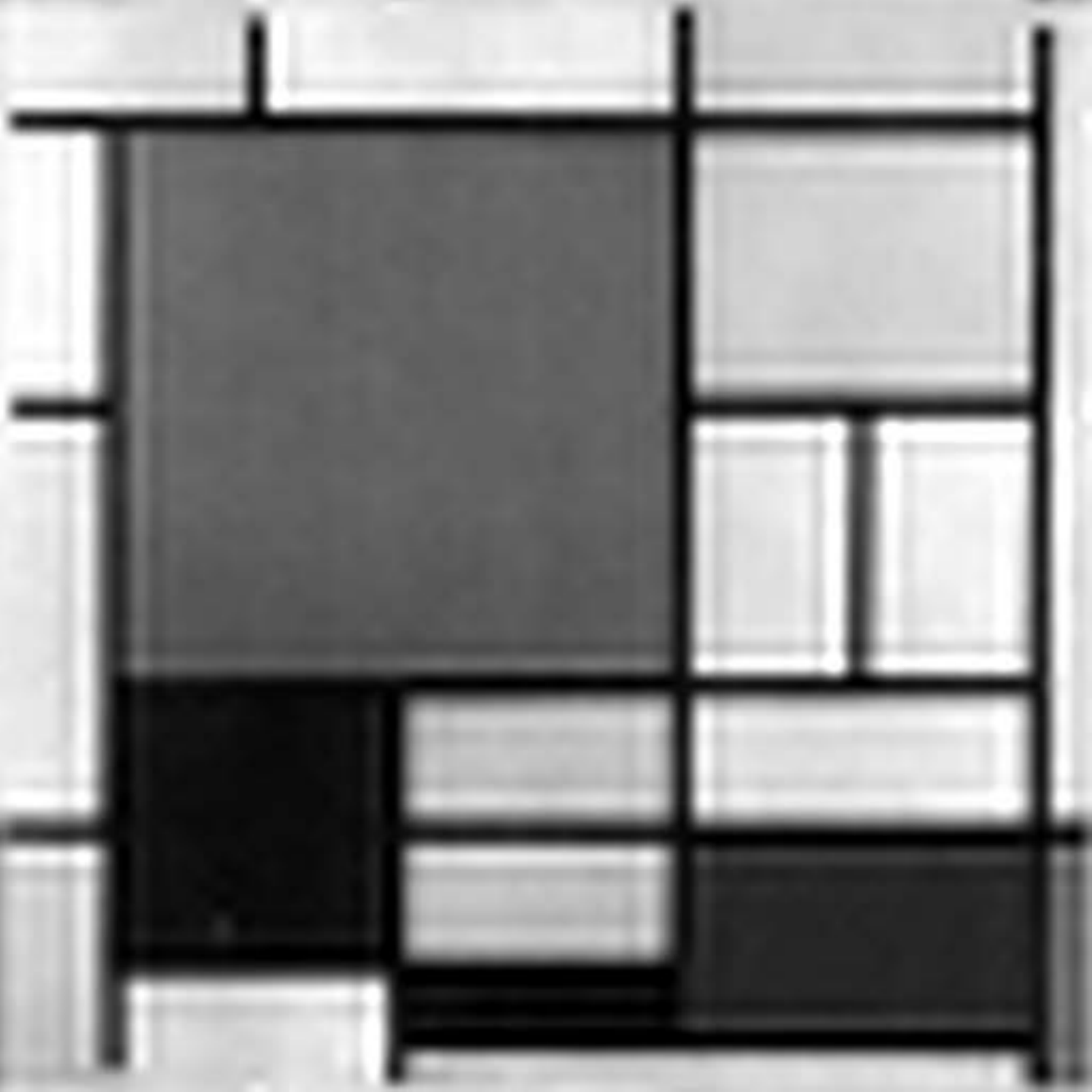}}%
  \subfigure[BP]{
    \label{Fig:recon_BP}
    \includegraphics[width=1.7in]{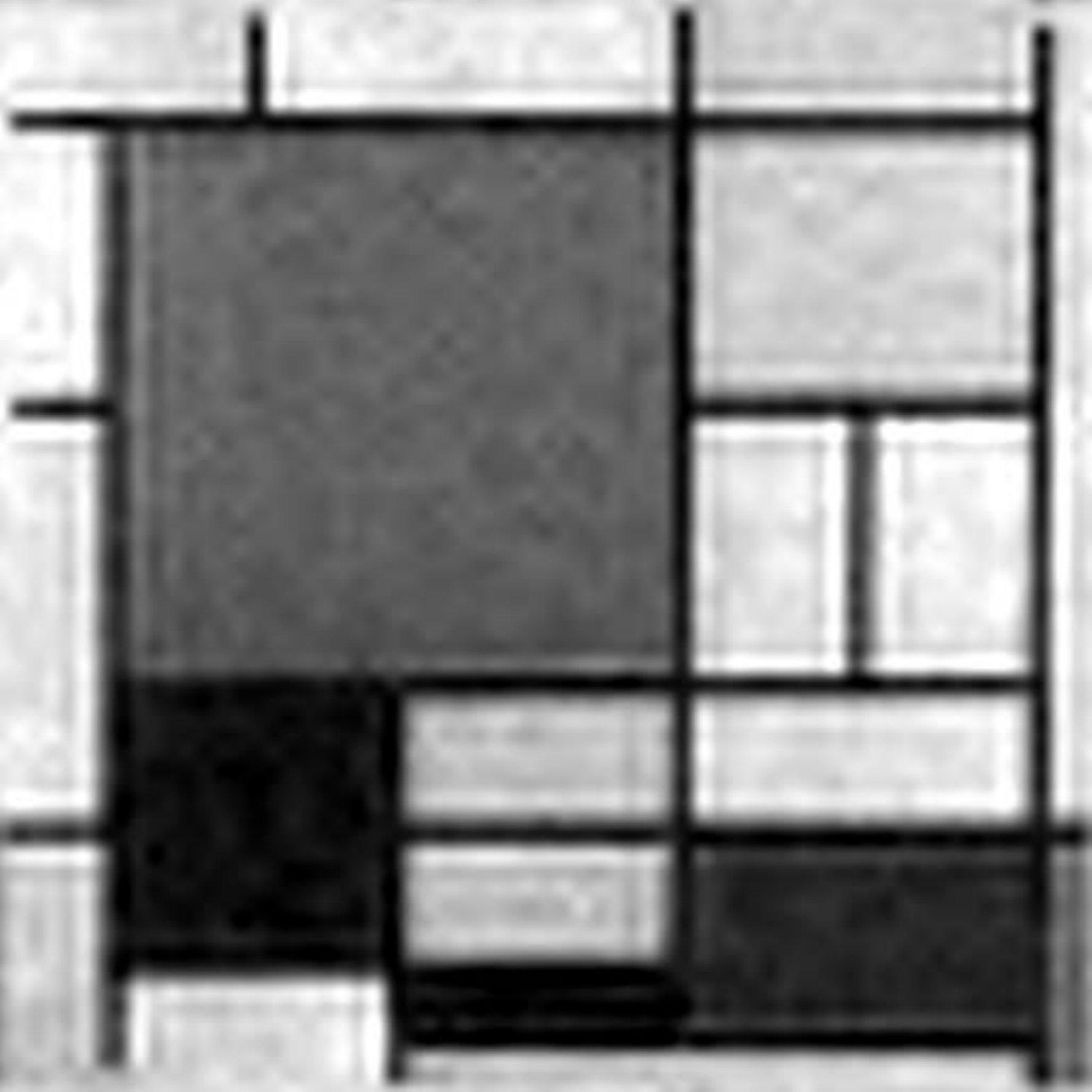}}%
  \subfigure[FDR]{
    \label{Fig:recon_FDR}
    \includegraphics[width=1.7in]{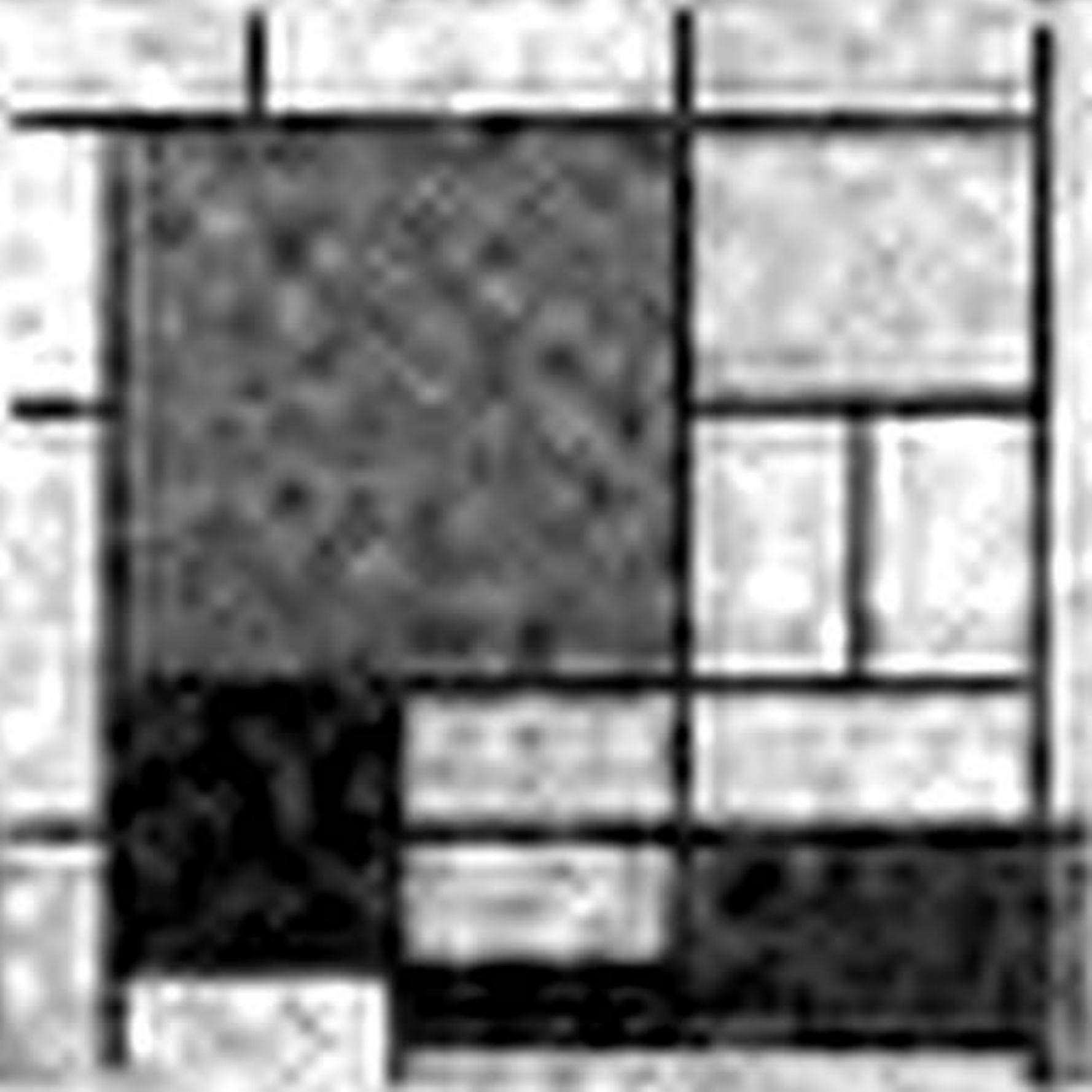}}
  \subfigure[FAR]{
    \label{Fig:recon_FAR}
    \includegraphics[width=1.7in]{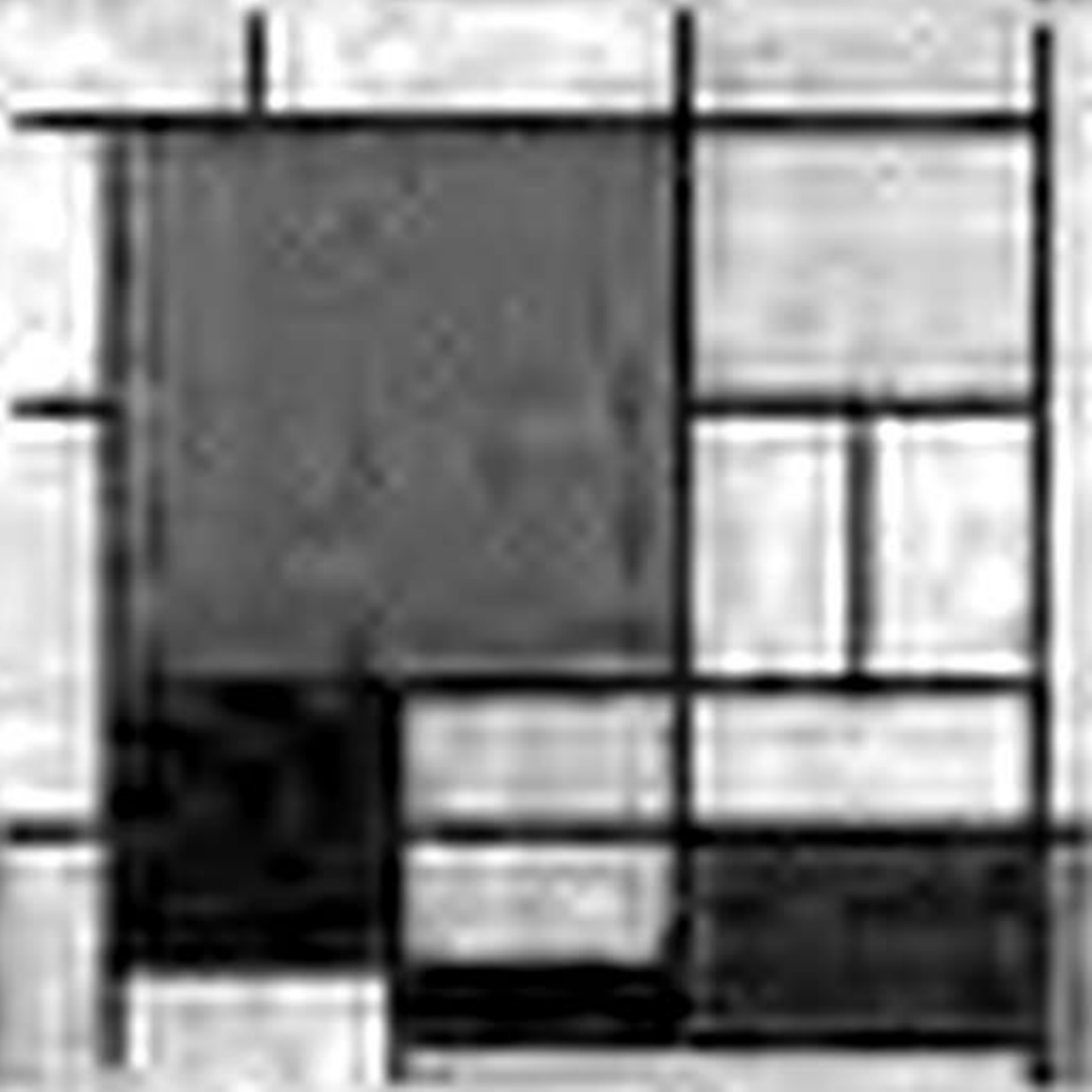}}%
  \subfigure[BCS]{
    \label{Fig:recon_BCS}
    \includegraphics[width=1.7in]{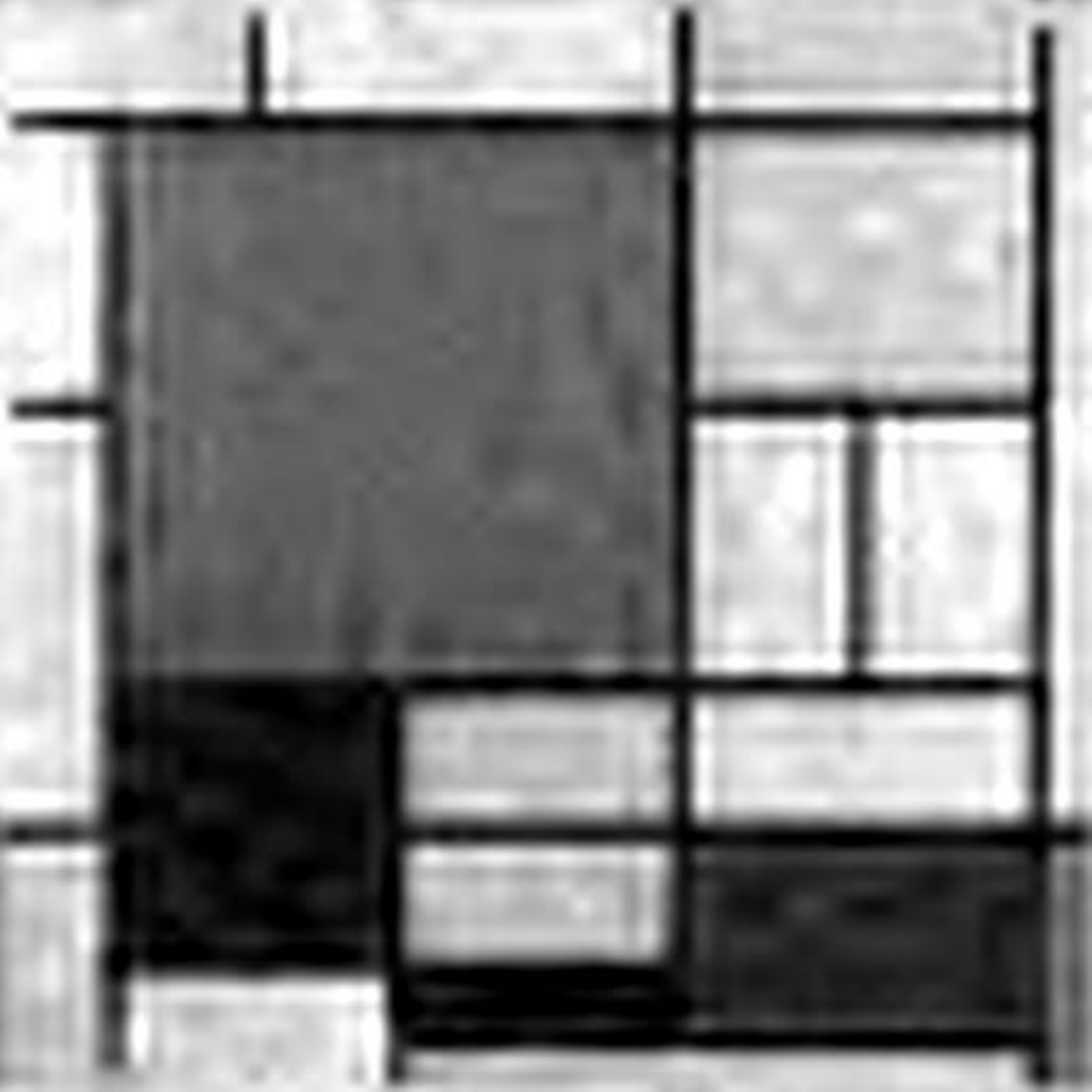}}%
  \subfigure[Laplace]{
    \label{Fig:recon_Lap}
    \includegraphics[width=1.7in]{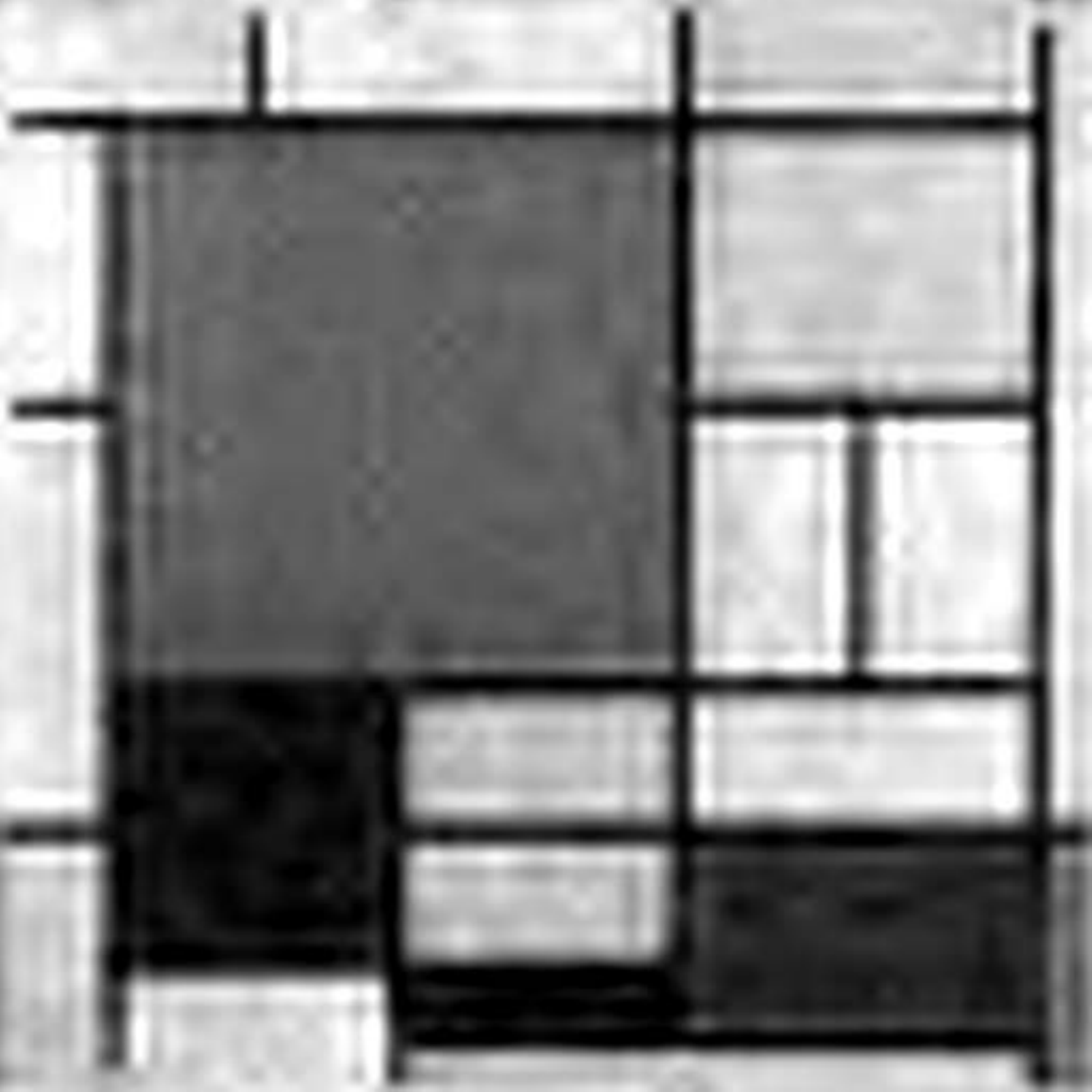}}%
  \subfigure[Proposed]{
    \label{Fig:recon_yang}
    \includegraphics[width=1.7in]{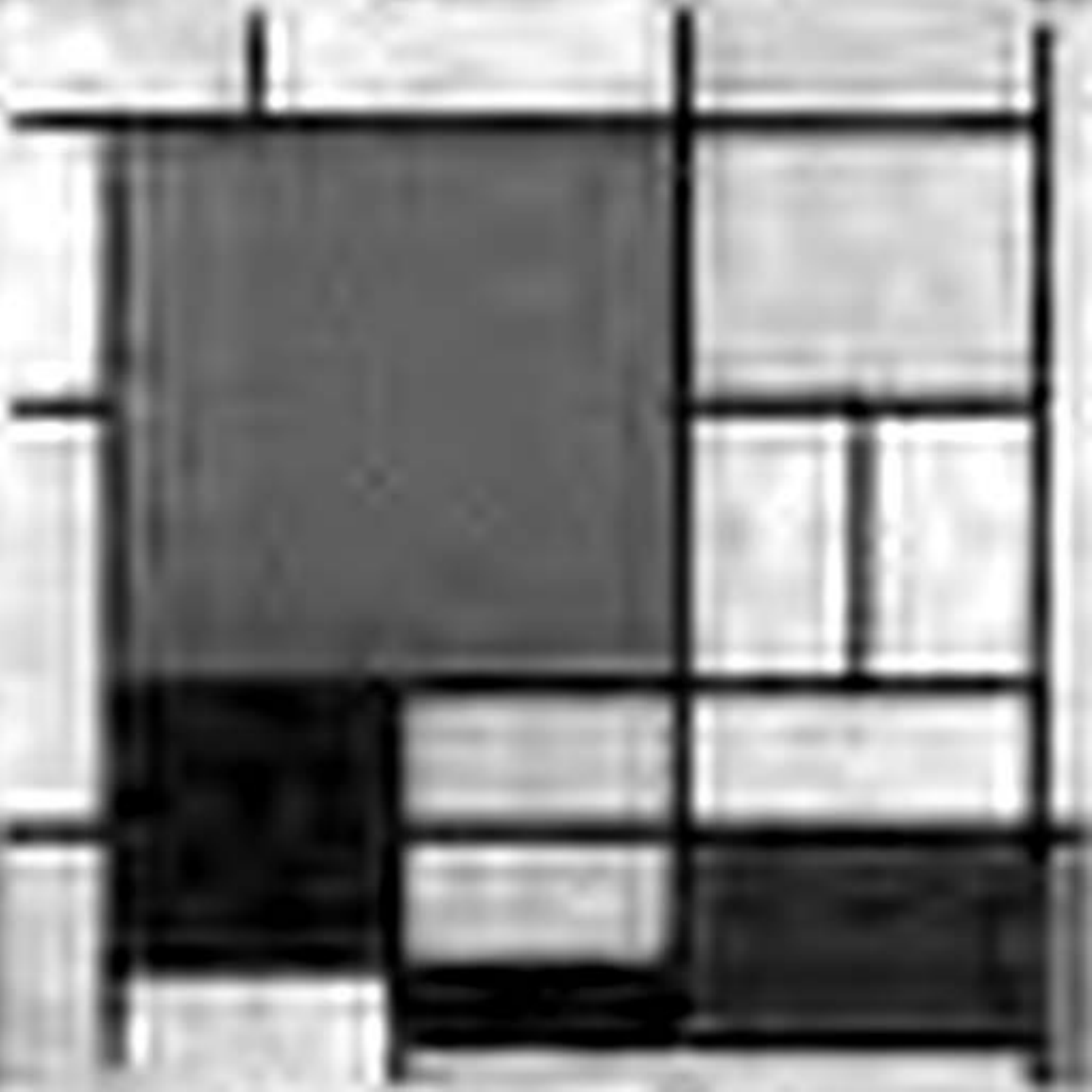}}
\centering
\caption{The $512\times512$ Mondrian image (a) and its reconstructions using (b) linear reconstruction ($\text{RMSE}=0.1333$) from $N=4096$ wavelet samples and a multiscale CS scheme from $M=2713$ linear measurements by (c) BP ($\text{RMSE}=0.1391$, $\text{time}=44.7s$ and $\text{\# nonzeros}=4096$), (d) StOMP with FDR thresholding ($\text{RMSE}=0.1751$, $\text{time}=10.3s$ and $\text{\# nonzeros}=2014$), (e) StOMP with FAR thresholding ($\text{RMSE}=0.1529$, $\text{time}=15.7s$ and $\text{\# nonzeros}=1088$), (f) BCS ($\text{RMSE}=0.1448$, $\text{time}=24.1s$ and $\text{\# nonzeros}=1293$), (g) Laplace ($\text{RMSE}=0.1427$, $\text{time}=25.5s$ and $\text{\# nonzeros}=1229$) and (h) our proposed method ($\text{RMSE}=0.1440$, $\text{time}=19.8s$ and $\text{\# nonzeros}=1033$).} \label{Fig:image}
\end{figure*}

\section{Conclusion} \label{sec:conclusion}
The sparse signal recovery problem in CS was studied in this paper. Within the framework of Bayesian CS, a new hierarchical sparsity-inducing prior was introduced and efficient signal recovery algorithms were developed. Similar theoretical results on the global and local optimizations of the proposed method  were proven as that for the basic SBL. The main algorithm was shown to produce sparser solutions than its existing SBL peers. Numerical simulations were carried out to demonstrate the improved performance of the proposed sparsity-inducing prior and solution. The proposed G-STG prior preserves the general structure of existing hierarchical sparsity-inducing priors and can be implemented in other SBL-based methods with ease.

\bibliographystyle{IEEEtran}


\end{document}